\documentclass{article}


\usepackage{amsthm}  
\usepackage{amsmath, amssymb, mathtools}
\usepackage{color}
\usepackage{colortbl}
\usepackage{fullpage}
\usepackage{algorithm}
\usepackage[noend]{algpseudocode}
\usepackage{thmtools, thm-restate}
\usepackage{tikz}
\usepackage{enumerate,enumitem}
\usepackage{thm-restate}
\usepackage{hyperref}
\usepackage{graphicx}
\usepackage{amsopn}
\usepackage{caption}
\usepackage{hyperref}
\usepackage{subcaption}
\usepackage{enumitem,linegoal}
\usepackage{comment}
\usepackage{float}
\usepackage[super]{nth}
\usepackage{subcaption}

\usepackage{natbib}
\usepackage{mdframed}
\usepackage{thm-restate}

\newtheorem{theorem}{Theorem}[section]
\newtheorem{lemma}[theorem]{Lemma}

 \newtheorem{proposition}[theorem]{Proposition}
 \newtheorem{corollary}[theorem]{Corollary}

\theoremstyle{definition}
 \newtheorem{definition}[theorem]{Definition}




\newcommand{\R}{\mathbb{R}}
\newcommand{\N}{\mathcal{N}}
\newcommand{\M}{\mathcal{M}}
\newcommand{\I}{\mathcal{I}}
\newcommand{\V}{\mathcal{V}}

\newcommand{\ef}{\mathsf{EF}}
\newcommand{\efx}{\mathsf{EFX}}

\newcommand{\mms}{\mathsf{MMS}}

\newcommand{\eefx}{\mathsf{EEFX}}

\newcommand{\HA}[1]{\textcolor{magenta}{(Hana: #1)}}

\definecolor{mygreen}{RGB}{20,140,80}
\definecolor{mylightgray}{RGB}{230,230,230}

\definecolor{mygreen}{RGB}{20,140,80}
\definecolor{mydarkgray}{gray}{0.15} 
\definecolor{oceanblue}{HTML}{2c55c2}
\hypersetup{
     colorlinks=true,
     citecolor= mygreen,
     linkcolor= black
}


\usepackage[preprint]{neurips_2024}



\usepackage[utf8]{inputenc} 
\usepackage[T1]{fontenc}    
\usepackage{hyperref}       
\usepackage{url}            
\usepackage{booktabs}       
\usepackage{amsfonts}       
\usepackage{nicefrac}       
\usepackage{microtype}      
\usepackage{xcolor}         

\title{\boldmath Achieving Maximin Share and $\efx/\ef1$ Guarantees Simultaneously}

%

\author{%
    Hannaneh Akrami\\
   Max Planck Institute for Informatics\\
   Graduiertenschule Informatik, 
   Universit\"at des Saarlandes\\
   \texttt{hakrami@mpi-inf.mpg.de} \\
   \And
   Nidhi Rathi\\
   Max Planck Institute for Informatics\\
   Saarland Informatics Campus\\
   \texttt{nrathi@mpi-inf.mpg.de}
}

\begin{document}

\maketitle
\begin{abstract}
    We study the problem of computing \emph{fair} divisions of a set of indivisible goods among agents with \emph{additive} valuations. For the past many decades, the literature has explored various notions of fairness, that can be primarily seen as either having \emph{envy-based} or \emph{share-based} lens. For the discrete setting of resource-allocation problems, \emph{envy-free up to any good} ($\efx$) and \emph{maximin share} ($\mms$) are widely considered as the flag-bearers of fairness notions in the above two categories, thereby capturing different aspects of fairness herein. Due to lack of existence results of these notions and the fact that a good approximation of $\efx$ or $\mms$ does not imply particularly strong guarantees of the other, it becomes important to understand the compatibility of $\efx$ and $\mms$ allocations with one another. 

    In this work, we identify a novel way to simultaneously achieve $\mms$ guarantees with $\efx$/$\ef1$ notions of fairness, while beating the best known approximation factors \citep{chaudhury2021little,amanatidis2020multiple}. Our main contribution is to constructively prove the existence of (i) a partial allocation that is both $2/3$-$\mms$ and $\efx$, and (ii) a complete allocation that is both $2/3$-$\mms$ and $\ef1$. Our algorithms run in pseudo-polynomial time if the approximation factor for $\mms$ is relaxed to $2/3-\varepsilon$ for any constant $\varepsilon > 0$ and in polynomial time if, in addition, the $\efx$ (or $\ef1$) guarantee is relaxed to $(1-\delta)$-$\efx$ (or $(1-\delta)$-$\ef1$) for any constant $\delta>0$. In particular, we improve from the best approximation factor known prior to our work, which computes partial allocations that are $1/2$-$\mms$ and $\efx$ in pseudo-polynomial time \citep{chaudhury2021little}.
\end{abstract}

\section{Introduction}
The theory of fair division addresses the fundamental problem of allocating a set of resources among a group of individuals, with varied preferences, in a meaningfully \emph{fair} manner. The need of fairness is a key concern in the design of many social institutions, and it arises naturally in multiple real-world settings such as division of inheritance
, dissolution of business partnerships, divorce settlements
, assigning computational resources in a cloud computing environment, course assignments
, allocation of radio and television spectrum, air traffic management, to name a few (see \cite{PrattZ90,brams1996fair,BramsT96,moulin2004fair,BudishC10}). 

The numerous applications depicting the necessity of understanding the process of fairly dividing resources among multiple economic players has given rise to a formal theory of \emph{fair division}. Such problems lie at the interface of economics, mathematics, and computer science, and they have been extensively studied for past several decades \citep{moulin2019fair}. Although the roots of fair division can be found in antiquity, for instance, in ancient Greek mythology and the Bible, its formal history stretches back to the seminal work of Steinhaus, Banach and Knaster in 1948 \citep{steinhaus1948problem}. 

In this work, we study the well-studied fair division setting of allocating a set of discrete or indivisible items among agents. A fair division instance consists of a set  $\N=[n]$ of $n$ agents and a set $\M$ of $m$ items. Every agent $i$ specifies her preferences over the items via an \emph{additive} valuation function $v_i:2^\M \rightarrow \R_{\geq 0}$. The goal is to find a partition $X = (X_1, \dots, X_n)$ of the given items such that every agent $i \in \N$, upon receiving bundle $X_i$, considers $X$ to be \emph{fair}.

Primarily, there have been two ways of defining fairness for resource-allocation settings: (i) \emph{envy-based}, where an agent \emph{compares} her bundle with other bundles in the allocation to decide if it is \emph{fair} to her and (ii) \emph{share-based}, where an agent considers an allocation to be \emph{fair} for her through the value she obtains from her bundle (irrespective of what others receive). \emph{Envy-freeness} \citep{foley1966resource} is arguably the flag-bearer of envy-based notions of fairness that entails an allocation $X$ to be \emph{fair} if every agent values her own bundle at least as much as she value any other agent's bundle (i.e., $v_i(X_i) \geq v_i(X_j)$ for all $i,j \in [n]$). On the other hand, \emph{proportionality} is an important share-based notion of fairness that entails an allocation to be fair when every agent $i \in [n]$ values her bundle at least as much as her proportional share value of $v_i(\M)/n$. Both of these notions are known to exist in the setting where the resource is divisible (i.e., a cake $[0,1]$), but unfortunately, a simple instance where a single valuable (indivisible) item is to be divided between two agents does not admit any envy-free or proportional allocation. 

Within the last decade, we have seen an extensive study of various relaxations of envy-freeness and proportionality that are more suitable for the discrete setting. Among those, the most prominent relaxations, and the focus of our work, include the notions of \emph{envy-freeness up to any good} ($\efx$) \citep{caragiannis2016unreasonable}, \emph{envy-freeness up to one good} ($\ef1$) \citep{lipton2004approximately}, and \emph{maximin share} fairness ($\mms$) \citep{budish2011combinatorial}. Here, $\efx$ and $\ef1$ relax the notion of envy-freeness, while maximin share is considered to be a natural relaxation of proportionality.

\emph{Maximin share} ($\mms$) has been one of the most celebrated relaxations of proportionality for the discrete setting. The maximin share value ($\mms^n_i(\M)$ or $\mms_i$) of an agent $i$, is defined to be the maximum value she can obtain among all possible allocations of the set of items $\M$ among $n$ agents, while receiving the minimum-valued bundle in any allocation.
Since $\mms$ allocations may not always exist for fair division instances with more than two agents \citep{procaccia2015cake}, a significant amount of research has been focused on achieving better approximation guarantees (i.e., for some $\alpha \in (0,1)$, every agent $i$ gets a value of $\alpha$-$\mms_i$) for maximin share. A recent breakthrough proves the existence and develops a PTAS to compute $(\frac{3}{4}+\frac{3}{3836})$-$\mms$ allocations for additive valuations \citep{akrami2024breaking}.

On the other hand, focusing on relaxations of envy-freeness, in an $\ef1$ allocation $X$, any agent $i \in [n]$ may envy another agent $j$, but the \emph{envy} must vanish after removing some good from the bundle $X_j$ (i.e., $v_i(X_i) \geq v_i(X_j \setminus \{g\})$ for some $g \in X_j$). $\ef1$ allocations are known to exist and can be computed in polynomial time as well \citep{lipton2004approximately}. Later, \citet{caragiannis2016unreasonable} introduced a stronger relaxation of envy-freeness called $\efx$. Here, again, in an $\efx$ allocation, any agent $i \in [n]$ may envy another agent $j$, but the envy now must vanish after removing \emph{any} good from the bundle $X_j$ (i.e., $v_i(X_i) \geq v_i(X_j \setminus \{g\})$ for all $g \in X_j$). As a complete contrast to $\ef1$, the notion of $\efx$ is fundamentally more challenging and despite significant efforts, the community has not been able to fully understand the existential and computational guarantees of $\efx$ allocations. For instance, the biggest open problem in fair division is to resolve the existence of $\efx$ allocations for instances with four or more agents \citep{procaccia2020technical}. 

Reasonably enough, several approximations and relaxations of $\efx$ have been extensively studied (see Section \ref{sec:related} for more details). One of the notable results herein is pseudo-polynomial time computability of \emph{partial} $\efx$ allocations where at most $n-1$ goods go to \emph{charity} (i.e., remain unallocated) such that no agent envies the charity bundle \citep{chaudhury2021little}.


It is relevant to note that the notions of $\ef1$/$\efx$ and $\mms$ capture different aspects of fairness. Either of $\ef1$/$\efx$ or $\mms$ properties does not necessarily imply particularly strong approximation guarantees for the other(s) \citep{comparing-efx}. In Section \ref{sec:relation}, we discuss the guarantees $\efx$/$\ef1$ allocations can provide for $\mms$ and vice versa. This is in complete contrast to the divisible setting guarantees, where any envy-free allocation is necessarily proportional as well. Hence, it becomes compelling to ask for allocations that attain good guarantees with respect to envy-based and share-based notions of fairness simultaneously. There are few works along these lines in the literature, e.g., \cite{barman2018groupwise,McGlaughlinG20}, some of which give purely existential guarantees \citep{caragiannis2016unreasonable}. Motivated by the above question, this work focuses on understanding the compatibility of two different classes of fairness notions, i.e., in particular, $\mms$ with $\ef1/\efx$. 

\subsection{Our Results}
We study fair division instances with agents having additive valuations over a set of indivisible items. The aim of this work is to push our understanding of the compatibility between two different classes of fairness notions: $\efx/\ef1$ with $\mms$ guarantees. 
Our main contribution is developing (simple) algorithms for achieving $\efx/\ef1$ and $\mms$ guarantees simultaneously. \\

\noindent
\textbf{Main Theorem:} For any fair division instance, we show that there exists
\begin{enumerate}
    \item a partial allocation that is both $2/3$-$\mms$ and $\efx$ [see Theorem~\ref{thm:efx+MMS-1} and Algorithm~\ref{alg:approx-mms+efx}].
    \item a complete allocation that is both $2/3$-$\mms$ and $\ef1$ [see Theorem~\ref{thm:ef1+MMS} and Algorithm~\ref{alg:approx-mms+ef1}].
\end{enumerate}
If we relax $2/3$-$\mms$ to $(2/3-\varepsilon)$-$\mms$ for any arbitrary constant $\varepsilon>0$, then the above allocations can be computed in pseudo-polynomial time. If in addition to that, we relax $\efx$/$\ef1$ to $(1-\delta)$-$\efx$/$(1-\delta)$-$\ef1$, then the allocations can be computed in polynomial time.

We note that the above results have led to a new approach for finding desired partial $\efx$ allocations, in particular, where we have a good bound on the amount of value each agent receives. It is known that $\efx$ is not compatible with the economic efficiency notion of Pareto optimality \citep{plaut2020almost}. Therefore, it may seem that, in order to guarantee $\efx$, one might have to sacrifice a lot of utility and agents may not receive bundles with high valuations. Nevertheless, using Algorithm~\ref{alg:approx-mms+efx}, we prove that we can still guarantee their $2/3$-$\mms$ value to every agent while finding a partial $\efx$ allocation. 

We use Algorithm~\ref{alg:approx-mms} developed by \citet{amanatidis2021maximum} to compute $2/3$-$\mms$ allocations as a starting point to have share-based guarantee. Here, as soon as an agent receives a bundle, she is taken out of consideration. This feature of the algorithm is incompatible with achieving any envy-based guarantees.\footnote{We note that this feature is common to many other algorithms achieving share-based guarantees in the fair division literature. See ``valid reductions'' in Section~\ref{sec:mms}.} We overcome this barrier and develop a novel algorithm (Algorithm~\ref{alg:approx-mms+efx}) that removes the \emph{myopic} nature of Algorithm~\ref{alg:approx-mms} and also looks into the future and modifies the already-allocated bundles if needed. Interestingly enough, the share-based guarantee that we maintain for a subset of agents (whose size keep growing) throughout the execution of Algorithm~\ref{alg:approx-mms+efx} helps us to prove envy-based guarantees as well. 

Our first result improves the guarantees shown by \citet{chaudhury2021little} where they develop a pseudo-polynomial time algorithm to compute a partial allocation that is both $1/2$-$\mms$ and $\efx$. Also, \citet{amanatidis2020multiple} develop an efficient algorithm to compute a complete allocation that is simultaneously $0.553$-$\mms$ and $0.618$-$\efx$; note that, this is incomparable to the guarantees that we develop in this work. On the other hand, the best known approximation factors, prior to our work, for simultaneous guarantees on $\mms$ and $\ef1$ was by \citet{amanatidis2020multiple} where they efficiently find allocations that are $4/7$-$\mms$ and $\ef1$.

 Finally, we also exhibit a constructive proof of
\begin{enumerate}
    \setcounter{enumi}{2}
    \item the existence of a (partial) allocation that is both $\alpha$-$\mms$ and $\efx$ for $\alpha=\max(2/3, \frac{1}{2-p/n})$, where $p < n$ goods are unallocated and given to charity such that no agent envies the charity [Theorem \ref{thm:efx+MMS-2}].
\end{enumerate}
Here, we improve the result of \citet{chaudhury2021little} where they prove the same existential result except that $\alpha=\frac{1}{2-p/n}$. We note that their (and our) algorithm has no power on what $p$ will be except that it cannot be larger than $n-1$. Hence, their result does not prove any existential result on the simultaneous guarantees for $\efx$ and $\delta$-$\mms$ for any constant $\delta>1/2$.



\subsection{Further Related Work}\label{sec:related}

For the $\mms$ problem, \cite{kurokawa2018fair} showed the existence of $2/3$-$\mms$ allocations, while \citeauthor{barman2020approximation} established its tractability for instances with additive valuations. Many follow-up works are filled with extensive studies to improve the approximation factor for $\mms$ allocations e.g., see~\cite{amanatidis2017approximation,kurokawa2018fair,ghodsi2018fair,barman2020approximation,garg2020improved,FST21,simple,akrami2024breaking} for additive,~\cite{barman2020approximation, ghodsi2018fair,uziahu2023fair} for submodular,~\cite{ghodsi2018fair,seddighin2022improved,MMS-XOS} for XOS, and~\cite{ghodsi2018fair, seddighin2022improved} for subadditive valuations. 



For $\efx$ allocations, \cite{plaut2020almost} proved its existence for two agents with monotone valuations. A breakthrough result by \citeauthor{chaudhury2020efx} proved the existence of $\efx$ allocations for instances with three agents having additive valuations. Many follow-up works strengthened this result with more general valuations ~\citep{chaudhury2020efx,berger2021almost,AkramiACGMM23}. $\efx$ allocations exist when agents have identical~\citep{plaut2020almost}, binary~\citep{halpern2020fair}, or bi-valued~\citep{amanatidis2021maximum} valuations.
Several approximations~\citep{chaudhury2021little,amanatidis2020multiple,chan2019maximin,farhadi2021almost} and relaxations~\citep{amanatidis2021maximum,caragiannis2019envy,berger2021almost,mahara2021extension,chasmjahan23,aram22,ef2x} of $\efx$ have become an important line of research in discrete fair division. Inspired by the work of \citet{ABCGL18}, a recent work by \citet{Caragiannis2023} has defined an interesting and useful relaxation of $\efx$, called \emph{epistemic $\efx$} ($\eefx$). $\eefx$ allocations are guaranteed to exist for instances with monotone valuations \cite{AR24eefx} and can be computed in polynomial time for instances with additive valuations \citep{Caragiannis2023}.

Some of these fairness criteria have also been studied in combination with other objectives, such as Pareto optimality \citep{barman2018finding}, truthfulness \citep{amanatidis2016truthful,amanatidis2017truthful} or maximizing the Nash welfare \citep{caragiannis2016unreasonable,caragiannis2019envy,chaudhury2021little}.



An excellent recent survey by~\citet{survey2022} discusses the above fairness concepts and many more.
Another aspect of discrete fair division which has garnered an extensive research is when the items that needs to be divided are \emph{chores}. We refer the readers to the survey by~\citet{guo2023survey} for a comprehensive discussion.

\section{Definitions and Notation} \label{sec:prelim}
For any positive integer $k$, we use $[k]$ to denote the set $\{1,2,\ldots,k\}$.
We write $\N=[n]$ to denote the set of $n$ agents and $\M=\{g_1, \ldots, g_m\}$ to denote the set of $m$ indivisible items. For an agent $i \in \N$, the valuation function $v_i:2^\M \rightarrow \R_{\geq 0}$ represents her value over the set of items. For simplicity, we will often write $g$ instead of $\{g\}$ for an item $g \in \M$. In this work, we assume that valuation functions $v_i$'s are additive i.e., for any agent $i \in \N$, $v_i(S) = \sum_{g \in S} v_i(g)$ for any subset $S \subseteq \M$. We denote a fair division instance by $\I = (\N, \M, \V)$, where $\V=(v_1, v_2, \ldots, v_n)$. When we say `fair division instance with additive valuations', we mean an instance with every agent having an additive valuation.

An allocation $X = (X_1, X_2, \ldots, X_n)$ is a partition of a subset of $\M$ into $n$ bundles, such that $X_i$ is the bundle allocated to agent $i \in [n]$ and $P(X) = M \setminus \bigcup_{i \in [n]} X_i$ is the set (pool) of unallocated goods. If $P(X)=\emptyset$, then we say $X$ is a \emph{complete} allocation, otherwise, we say $X$ is a \emph{partial} allocation. Also, we write $\Pi_n$ to denote the set of all partitions of $\M$ into $n$ bundles, i.e., the set of all $n$-partitions of the set $\M$.

We study the approximations and relaxations of classic fairness notions of envy-freeness and proportionality. We begin by defining the concept of \emph{strong envy} to state the fairness notions of $\efx$ and $\ef1$.

\begin{definition}[Strong Envy]
    Upon receiving bundle $A$, we say that agent $i$ \emph{strongly envies} a bundle $B$, if there exists an item $g \in B$ such that $v_i(A)<v_i(B \setminus g)$. Given an allocation $X$, agent $i$ \emph{strongly envies} agent $j$ if there exists an item $g \in X_j$ such that $v_i(X_i)<v_i(X_j \setminus g)$.
\end{definition}

\begin{definition}[Envy-freeness up to any item ($\efx$)]
    An allocation $X$ is $\efx$ if no agent strongly envies any other agent. In other words, for all $i,j \in \N$ and all $g \in X_j$, we have $v_i(X_i) \geq v_i(X_j \setminus g)$.
\end{definition}

For any $\alpha \geq 0$, an allocation $X$ is $\alpha$-$\efx$, if for all $i,j \in \N$ and all $g \in X_j$, we have $v_i(X_i) \geq \alpha \cdot v_i(X_j \setminus g)$.

\begin{definition}[Envy-freeness up to one item ($\ef1$)]
    An allocation $X$ is $\ef1$, if for all $i,j \in \N$, we either have $v_i(X_i) \geq v_i(X_j)$, or there exists $g \in X_j$ such that $v_i(X_i) \geq v_i(X_j \setminus g)$.
\end{definition}

Similarly, for any $\alpha \geq 0$, an allocation $X$ is $\alpha$-$\ef1$, if for all $i,j \in \N$, we either have $v_i(X_i) \geq \alpha \cdot v_i(X_j)$ or there exists $g \in X_j$ such that $v_i(X_i) \geq \alpha \cdot v_i(X_j \setminus g)$. 

We now define the concept of \emph{most envious agent} for a bundle, which come useful in developing  our algorithms.
\begin{definition}[Most envious agent]
    Given a bundle $B$ and a (partial) allocation $X$, an agent $i \in \N$ is a \emph{most envious agent} of bundle $B$, if there exists a proper subset $B' \subsetneq B$ such that $v_i(B')>v_i(X_i)$ and no other agent $j \in \N$ such that $j \neq i$ strongly envies $B'$.    
\end{definition}

\begin{restatable}{observation}{obs}
    Given a fair division instance with additive valuations, consider a bundle $B$ and a (partial) allocation $X$. If there exists an agent $i$ who strongly envies $B$, then there exists an agent who is a most envious agent of $B$, and she can be identified in polynomial time.
\end{restatable}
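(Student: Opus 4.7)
The plan is to construct a most envious agent via a simple monotone shrinking procedure on $B$. The starting observation is that if agent $i$ strongly envies $B$, then there exists $g \in B$ with $v_i(B \setminus g) > v_i(X_i)$, so the family $\mathcal{S} := \{ S \subsetneq B \mid \exists k \in \N,\ v_k(S) > v_k(X_k)\}$ is nonempty. I would then focus on an inclusion-minimal element $S^\star \in \mathcal{S}$ and show it certifies a most envious agent.

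By definition there is some agent $i$ with $v_i(S^\star) > v_i(X_i)$, so the first condition in the definition of most envious agent is immediate. For the second condition, suppose toward a contradiction that some $j \neq i$ strongly envies $S^\star$. Then by the definition of strong envy there is a $g \in S^\star$ with $v_j(S^\star \setminus g) > v_j(X_j)$, which would place $S^\star \setminus g$ into $\mathcal{S}$ and contradict the inclusion-minimality of $S^\star$. Hence no agent other than $i$ strongly envies $S^\star$, and $i$ is a most envious agent of $B$.

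For the polynomial-time identification, I would implement the minimality argument greedily. Start with $S \leftarrow B$, and while some agent $k$ strongly envies $S$, pick a witness $g \in S$ with $v_k(S \setminus g) > v_k(X_k)$ and update $S \leftarrow S \setminus g$. Because valuations are additive, checking whether agent $k$ strongly envies $S$ reduces to testing $v_k(S) - \min_{g \in S} v_k(g) > v_k(X_k)$, which takes $O(|S|)$ time per agent, so each iteration costs $O(n|S|)$ and there are at most $|B|$ iterations. The loop must execute at least once (since $i$ strongly envies the initial $B$), so the returned $S$ is a proper subset of $B$; and the invariant that at least one agent envies $S$ is maintained, since the very item removal that shrinks $S$ also certifies $v_k(S_{\text{new}}) > v_k(X_k)$.

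The main thing to be careful about is this invariant that some agent still envies $S$ after each removal, so that upon termination we have both a witness to condition (1) and, by the loop's termination condition, to condition (2); but this falls out of the removal rule itself, so no nontrivial obstacle arises. The rest is routine bookkeeping.
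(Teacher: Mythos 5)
Your proof is correct and rests on the same idea as the paper's: greedily shrink $B$ to an (inclusion-)minimal proper subset that some agent still envies, and observe that minimality rules out any agent strongly envying that subset, so its enviers are most envious agents. The only difference is organizational --- you run one interleaved shrinking loop over all agents, whereas the paper computes a minimal envied subset $B_j$ per agent and then selects the one of smallest cardinality --- and both yield the same polynomial bound.
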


\begin{proof}
    For all agents $j$ who strongly envy $B$, let $B_j \subset B$ be an inclusion-wise minimal subset such that $v_j(X_j)<v_j(B)$. Let $j^*$ be such that $|B_{j^*}|$ is minimum. Then no agent $j$ strongly envies $B_{j^*}$ and thus $j^*$ is a most envious agent of $B$. The sets $B_j$'s can be computed in polynomial time for each agent $j$ by greedily removing goods from $B$ as long as the value of the remaining set exceeds $v_j(X_j)$. And therefore, agent $j^*$ can be identified in polynomial time as well.
\end{proof}

We next discuss the share-based fairness notion of 
\emph{maximin share} ($\mms$). We define the maximin share value of an agent $i \in [n]$ as the maximum value she can guarantee for herself, if she partitions the goods into $n$ bundles and receives a bundle with minimum value (to her). Then, for any agent $i \in [n]$, we write her maximin share value as,
\begin{align}
    \mms^n_i(\M) \coloneq \max_{(A_1, \ldots, A_n) \in \Pi_n} \min_{A_j} v_i(A_j).
\end{align}
where, $\Pi_n$ is the set of all partitions of $\M$ into $n$ bundles. When $n$ and $\M$ are clear from the context, we write $\mms_i$ instead of $\mms^n_i(\M)$.

\begin{definition}[$\alpha$-$\mms$ Allocation]
    For any $\alpha \in [0,1]$, allocation $X$ is $\alpha$-$\mms$, if for all agents $i \in \N$, we have $v_i(X_i) \geq \alpha\cdot\mms_i$. We say an allocation $X$ is $\mms$, if it is $1$-$\mms$.
\end{definition}

Note that, the definition of $\mms$ dictates that for all $i \in \N$, there exists a partition $(A_1, \ldots, A_n)$ of $\M$ such that $v_i(A_j) \geq \mms^n_i(\M)$ for all $j \in [n]$. We call such a partition as an \emph{$\mms$-partition of agent $i$}. Similarly, a partition $(A_1, \ldots, A_n)$ of $\M$ such that $v_i(A_j) \geq \alpha\mms^n_i(\M)$ for all $j \in [n]$ is an $\alpha$-$\mms$-\emph{partition of agent $i$}.

\begin{proposition}[\citet{woeginger1997polynomial}]\label{prop:ptas}
    Given any fair division instance with additive valuations, there exists a PTAS to compute an $\mms$-partition of any agent $i \in \N$, and hence her $\mms_i$ value as well.
\end{proposition}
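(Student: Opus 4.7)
The plan is to cast the computation of $\mms^n_i(\M)$ as a max-min scheduling problem on $n$ identical machines and then apply a standard rounding-plus-dynamic-programming PTAS. From the very definition, $\mms^n_i(\M) = \max_{(A_1,\ldots,A_n)\in\Pi_n}\min_{j} v_i(A_j)$, so finding an $\mms$-partition of agent $i$ is exactly the problem of partitioning $m$ jobs with processing times $\{v_i(g)\}_{g \in \M}$ into $n$ groups to maximize the minimum total processing time. A PTAS for this scheduling problem immediately yields a partition whose minimum value is within a factor $(1-\varepsilon)$ of $\mms_i$, which both serves as an approximate $\mms$-partition and certifies the value $\mms_i$ to within the same factor.

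Fix the desired accuracy $\varepsilon > 0$. I would binary-search (on a geometric grid) a candidate value $T$ for $\mms_i$ and rescale so $T = 1$. Declare a good \emph{large} if $v_i(g) \ge \varepsilon$ and \emph{small} otherwise, and round each large value down to the nearest multiple of $\varepsilon^2$. This leaves $O(1/\varepsilon^2)$ distinct large-good values, and any bundle of value at most $1$ holds at most $1/\varepsilon$ large goods, so the number of possible \emph{bundle-configurations} (multisets of rounded large-good values of total mass $\le 1$) is bounded by a constant $C(\varepsilon)$. A dynamic program over the profile of still-unused large goods, indexed by how many bundles have already been assigned each configuration, decides in $\mathrm{poly}(m)$ time whether the large goods admit an assignment giving each of the $n$ bundles rounded value at least $1 - 2\varepsilon$. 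I would then complete the partition by handing every small good, one at a time, to the currently-lightest bundle.

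Taking the largest $T$ for which the above procedure succeeds yields a partition with $\min_j v_i(A_j) \ge (1-O(\varepsilon))\,\mms_i$. The approximation bound follows by starting with a genuine $\mms$-partition and rounding each bundle's large goods: rounding costs at most $\varepsilon^2$ per large good and at most $\varepsilon$ per bundle in total, so the resulting configuration is feasible for the DP at $T = (1-\varepsilon)\,\mms_i$, and the binary search therefore does not undershoot by more than this factor.

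The step I expect to be trickiest is the greedy fill-up of small goods: one must show that whenever $T$ is genuinely feasible, greedy lifts every bundle to within $O(\varepsilon)$ of $T$. The key observation is that each small good has value at most $\varepsilon$, so appending it to the lightest bundle cannot overshoot the target by more than $\varepsilon$; combined with the invariant that the total small-good mass suffices to cover the slack left by the DP (which follows from $T$-feasibility of the original instance), this guarantees that greedy succeeds exactly when the true optimum does and fails precisely when $T > \mms_i$. That dichotomy is what allows the binary search to locate $\mms_i$ within a $(1-\varepsilon)$ factor in time polynomial in $m$ for every fixed $\varepsilon$.
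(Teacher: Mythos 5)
The paper does not prove this proposition at all---it is imported verbatim from \citet{woeginger1997polynomial}, since computing an $\mms$-partition of agent $i$ is precisely the problem of maximizing the minimum machine completion time on $n$ identical machines. Your sketch reconstructs essentially that standard PTAS (guess the target on a geometric grid, round large items, enumerate bundle configurations, run a configuration DP, greedily fill with small items), and it is sound; the only details to tighten are that (a) a minimal set of large goods covering the target can have total mass up to roughly $2T$ rather than $T$ (and leftover large goods must be allowed to be dumped arbitrarily, which is harmless for covering), and (b) the geometric grid needs a constant-factor lower bound on the optimum (e.g.\ from LPT) to keep the number of guesses polynomial.
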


Lastly, we define two graphs inspired by share-based and envy-based fairness notions, that will prove useful in our algorithms.

\begin{definition}[Threshold-Graph]
    Given a partition $Y=(Y_1, \ldots, Y_n)$ of $\M$ into $n$ bundles and given a vector $t=(t_1, \ldots, t_n) \in \R^n_{\geq 0}$, we define the \emph{threshold-graph} as an undirected bipartite graph $T_{\langle Y,t \rangle}=(V,E)$, where $V$ has one part consisting of $n$ nodes corresponding to the agents and another part with $n$ nodes corresponding to the bundles $Y_1, \ldots, Y_n$. There exists an edge $(i,j)$ between (the node corresponding to) agent $i$ and (the node corresponding to) bundle $Y_j$ if and only if $v_i(Y_j) \geq t_i$. For all $i \in [n]$, we call $t_i$, the threshold share value of agent $i$.

     For a subset $S$ of the nodes, we write $N(S)$ to denote the set of neighbours of the nodes in $S$ in the threshold graph.
\end{definition}

\begin{definition}[Envy-Graph]
    Given an allocation $X$, we define the \emph{envy-graph} of $X$ as a directed graph $G_X = (V,E)$ where $V$ is a set of $n$ nodes corresponding to agents, and there exists an edge from (the node corresponding to) agent $i$ to (the node corresponding to) agent $j$, if and only if agent $i$ envies agent $j$, i.e.,  $v_i(X_j) > v_i(X_i)$. 
\end{definition}

\section{\boldmath Relations Between $\efx$/$\ef1$ and $\mms$} \label{sec:relation}
In this section, we briefly discuss the guarantees $\efx$/$\ef1$ allocations can provide for $\mms$ and vice versa. \citet{comparing-efx} gave a comprehensive comparison between these notions of fairness. Here we mention a few.
\begin{proposition}[\citet{comparing-efx}]
    For arbitrary $n \geq 1$, any $\efx$ allocation is also a $4/7$-$\mms$ allocation. On the other hand, an $\efx$ allocation is not necessarily an $\alpha$-$\mms$ allocation for $\alpha>0.5914$ and large enough $n$.
\end{proposition}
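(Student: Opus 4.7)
The plan is to attack the two halves of the proposition separately: the positive direction (every \efx allocation is $4/7$-\mms) and the tightness direction (this cannot be improved beyond roughly $0.5914$). The positive direction has a structural/counting flavor, while tightness requires an explicit construction.

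For the positive direction, I would fix an \efx allocation $X=(X_1,\ldots,X_n)$ and an agent $i$, and normalize so that $v_i(X_i)=1$; the goal becomes $\mms_i \le 7/4$. The key structural consequence of \efx that I would exploit is that for every $j\neq i$ and every $g\in X_j$, $v_i(X_j\setminus g)\le v_i(X_i)=1$. Two immediate corollaries follow. First, $X_i$ contains no item with $v_i$-value exceeding $1$ (since $v_i(X_i)\ge v_i(g)$ for any $g\in X_i$), and each $X_j$ with $j\neq i$ contains at most one ``large'' item (value $>1$); otherwise removing the smaller large item would leave a bundle that strongly envies $X_i$. Second, the total value of any other bundle is at most $1+v_i(g_j^*)$, where $g_j^*$ is the most valuable item in $X_j$.

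Next I would partition the item set according to $v_i$ into \emph{large} ($>1$), \emph{medium} ($(1/2,1]$), and \emph{tiny} ($\le 1/2$) items, and refine the bundle analysis: a bundle $X_j$ containing a large item can contain at most one medium item (two mediums together already exceed $1$); a bundle without a large item can contain at most two mediums on top of tinies (by the $v_i(X_j\setminus g)\le 1$ inequality applied to the top medium). I would then take an \mms-partition $(P_1,\ldots,P_n)$ of agent $i$, each $P_k$ having value $\ge \mms_i$, and apply pigeonhole: since there are at most $n-1$ large items in total (at most one per bundle in $X$, and none in $X_i$), some $P_k$ contains no large item, so $v_i(P_k)$ is realized entirely by mediums and tinies. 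Combining this with the bundle-shape constraints of $X$ and the upper bound $v_i(M)\le \sum_j (1+v_i(g_j^*))$, a case analysis on whether the value in $P_k$ is carried by mediums vs.\ tinies should yield $\mms_i \le 7/4$. The main obstacle here is making the case analysis tight; the constant $4/7$ is extremal, so slack anywhere in the medium/tiny bookkeeping will only give weaker bounds like $1/2$.

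For the tightness direction, I would construct a family of instances (parameterized by large $n$) with agents having symmetric additive valuations and three ``tiers'' of items --- a few high-value items, many medium items, and many tinies --- tuned so that every \efx allocation is forced to give at least one agent a bundle consisting of a medium item together with tinies, while the \mms-partition of that agent combines the medium items in a balanced way to achieve strictly larger share values. Computing $\mms_i$ exactly and comparing to the best \efx-feasible bundle value for the ``victim'' agent would yield the ratio $0.5914$ in the limit. The hard part of this half is guessing the right parameters so that the forced loss ratio is exactly the one claimed; I would start from the equality cases of the structural inequalities used in the positive direction to reverse-engineer the construction, and then verify via direct computation that the constructed instance admits no \efx allocation beating $0.5914\cdot\mms$.
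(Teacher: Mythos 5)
The paper does not actually prove this proposition; it is imported wholesale from \citet{comparing-efx}, so there is no in-paper argument to compare yours against and your attempt has to stand on its own. As written, it does not: both halves are plans rather than proofs. For the positive direction, your structural observations are individually correct (no item of value $>1$ in $X_i$, at most one such item in each $X_j$, $v_i(X_j)\le 1+v_i(g_j^*)$, at most one medium next to a large item, at most two mediums otherwise), but the step that is supposed to produce the constant $7/4$ is precisely the step you leave open. The pigeonhole by itself only tells you that some bundle $P_k$ of the $\mms$-partition must realize value at least $\mms_i$ from mediums and tinies alone; to get $\mms_i\le 7/4$ you must assume $\mms_i>7/4$ and show this medium/tiny mass is incompatible with the $\efx$-imposed shape of $X$, and that counting argument --- the entire content of the $4/7$ bound --- is absent. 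Since $4/7$ is extremal, any slack in the bookkeeping degrades the constant, so ``a case analysis should yield it'' cannot be accepted as a proof.

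For the tightness direction the gap is starker: you describe a strategy for finding a construction (three tiers of items, parameters reverse-engineered from the equality cases) but give no instance, no parameter values, and no computation, so nothing is established. Recovering the number $0.5914$ requires an explicit family of instances together with a proof that \emph{every} $\efx$ allocation of each instance gives some agent less than $(0.5914+o(1))\cdot\mms_i$; verifying that no $\efx$ allocation does better is itself a nontrivial universally-quantified claim that your sketch does not touch. To complete either half you would need to carry out the full argument of the cited work rather than outline its likely shape.
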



\begin{restatable}[\citet{comparing-efx}]{proposition}{propone}
    An $\ef1$ allocation is not necessarily an $\alpha$-$\mms$ allocation for any $\alpha>1/n$.
\end{restatable}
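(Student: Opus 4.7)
The plan is to exhibit, for each $n$, a fair-division instance together with an $\ef1$ allocation in which some agent receives exactly $1/n$ of her maximin share. This immediately rules out every approximation factor $\alpha > 1/n$.

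The construction I would use has $n$ agents with \emph{identical} additive valuations, and $2n-1$ items: $n-1$ ``large'' items $\ell_1,\dots,\ell_{n-1}$ each of value $n$, and $n$ ``small'' items $p_1,\dots,p_n$ each of value $1$. The total value is $n^2$, so the proportional upper bound gives $\mms_1 \le n$. For the matching lower bound, I would exhibit the partition $\{\ell_1\},\dots,\{\ell_{n-1}\},\{p_1,\dots,p_n\}$; every bundle is worth exactly $n$, hence $\mms_1 = n$.

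Next I would define the candidate $\ef1$ allocation by $X_1 = \{p_1\}$ and $X_j = \{\ell_{j-1},p_j\}$ for $j=2,\dots,n$, and verify the $\ef1$ property. Agent $1$ receives value $1$ and compares with a bundle of value $n+1$; removing the large item $\ell_{j-1}$ leaves $p_j$ of value $1 = v_1(X_1)$, so the $\ef1$ condition holds. For any agent $i\ge 2$, all bundles other than $X_1$ have value $n+1 = v_i(X_i)$, so there is no envy among agents $2,\dots,n$, and $v_i(X_1)=1 < v_i(X_i)$ rules out envy toward agent $1$. Hence the allocation is $\ef1$.

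Finally, since $v_1(X_1)/\mms_1 = 1/n$, any $\alpha>1/n$ would require $v_1(X_1) \ge \alpha n > 1$, which fails. I do not anticipate a real obstacle here; the only step that merits care is the $\mms_1 = n$ computation, which follows cleanly from the explicit balanced partition together with the proportional upper bound $\mms_1 \le v_1(\M)/n$. The construction is essentially tight, since the standard $\ef1$ pigeonhole argument $v_1(\M) \le n\cdot v_1(X_1) + n\cdot \max_g v_1(g)$ combined with $\mms_1 \le v_1(\M)/n$ already forces $v_1(X_1) \ge \mms_1/n - \max_g v_1(g)/n$, so a gap larger than $1/n$ cannot be produced this way.
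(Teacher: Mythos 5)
Your construction is correct and is the same one the paper uses, just rescaled by a factor of $n$ (the paper takes $n-1$ items of value $1$ and $n$ items of value $1/n$, giving $\mms_1=1$ and a shortchanged agent with value $1/n$); the $\mms$ computation, the $\ef1$ verification, and the resulting $1/n$ ratio all match the paper's argument.
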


\begin{proof}
    Consider the instance $\I$ with $n$ agents with identical valuation $v$ over $2n-1$ items. Assume $v(g_i)=1$ for all $i \in [n-1]$ and $v(g_i)=1/n$ for all $i \in [2n-1]\setminus[n-1]$. Consider the following partition of $\M$ into $n$ bundles. For $i \in [n-1]: A_i = \{g_i\}$ and $A_n=\{g_n, \ldots, g_{2n-1}\}$. $v(A_i)=1$ for all $i \in [n]$ and thus the $\mms$ value of all agents is $1$. Now consider the following allocation. $X_i = \{g_i,g_{n+i-1}\}$ for $i \in [n-1]$ and $X_n = \{g_{2n-1}\}$. For all $i \in [n-1]$ $v(X_i)=1+1/n$ and $v(X_n)=1/n$. The allocation $X$ is $\ef1$ since $v(X_n) \geq v(X_i \setminus \{g_i\})$ for all $i \in [n-1]$. However, $X$ is $1/n$-$\mms$.
\end{proof}

\begin{restatable}{proposition}{proptwo}
    For $n \geq 3$ and any $\alpha>0$, an $\alpha$-$\mms$ allocation is not necessarily $\beta$-$\ef1$ for any $\beta > 0$.
\end{restatable}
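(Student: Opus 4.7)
The plan is to exhibit, for every $n \geq 3$, a single instance together with an allocation that is $1$-$\mms$ (hence $\alpha$-$\mms$ for every $\alpha \in (0,1]$) but fails $\beta$-$\ef1$ for every $\beta > 0$. The key observation is that once some agent's own $\mms$ value equals $0$, the share-based guarantee gives no lower bound on her bundle's value, so we have the freedom to give her a bundle of value $0$ while some other bundle contains two items she values positively. That single envy-pair will then kill $\beta$-$\ef1$ for all $\beta > 0$.

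Concretely, I would take $n$ agents and $n+1$ items $g_1, \ldots, g_{n+1}$ with additive valuations specified as follows. Agent $1$ and agent $n$ both value $g_1$ and $g_2$ at $1$ and every other item at $0$. For $i \in \{2, \ldots, n-1\}$, agent $i$ values $g_{i+1}$ at $1$ and every other item at $0$. Each agent has at most two positively valued items, so since $n \geq 3$ every $n$-partition of $\M$ contains at least one bundle of value $0$ to her; hence $\mms_i = 0$ for every $i \in [n]$.

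Next, I would consider the complete allocation $X$ with $X_1 = \{g_{n+1}\}$, $X_i = \{g_{i+1}\}$ for $i \in \{2, \ldots, n-1\}$, and $X_n = \{g_1, g_2\}$. Since $\mms_i = 0$ for every $i$ and $v_i(X_i) \geq 0$, the allocation is $\alpha$-$\mms$ for every $\alpha > 0$ (in fact $1$-$\mms$). However, agent $1$ sees $v_1(X_1) = 0$, while $v_1(X_n \setminus g_1) = v_1(g_2) = 1$ and $v_1(X_n \setminus g_2) = v_1(g_1) = 1$. Thus for every choice of $g \in X_n$ we have $v_1(X_n \setminus g) = 1 > 0 = v_1(X_1)$, so $\beta$-$\ef1$ would require $0 \geq \beta$, which fails for every $\beta > 0$.

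There is essentially no obstacle here: the statement is a lower-bound/impossibility claim of separation type, and the whole content lies in spotting that $\mms_i = 0$ is compatible with $X$ being $\alpha$-$\mms$ at every $\alpha$ yet leaving plenty of room for a bundle elsewhere that retains positive value after deletion of any single item. Using only $n+1$ items and a very sparse $0/1$ valuation profile keeps both the $\mms$ computation and the $\ef1$ refutation immediate.
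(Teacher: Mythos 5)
Your proof is correct and follows essentially the same approach as the paper's: make every agent's $\mms$ value equal to $0$ so that any allocation is vacuously $\alpha$-$\mms$, then give one agent a two-item bundle that another agent values positively even after removing any single item. The paper's instance is just a more minimal version (two identically-valued items, all but one agent receiving the empty bundle), while yours additionally ensures every bundle is non-empty, which is a cosmetic strengthening but not a different argument.
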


\begin{proof}
    Consider the instance $\I$ with $n \geq 3$ agents with identical valuation $v$ over $2$ items with $v(g_1) = v(g_2)=1$. Clearly the $\mms$ value of all the agents is $0$ and thus all allocations are $\alpha$-$\mms$ for any $\alpha > 0$. Now consider the allocation $X$ with $X_n = \{g_1, g_2\}$ and $X_i=\emptyset$ for all $i \in [n-1]$. For all $i \in [n-1]$ and $\beta>0$, $v(X_i) < \beta v(X_n \setminus \{g_1\})$. Thus, $X$ is not $\beta$-$\ef1$.
\end{proof}

Therefore, we can conclude that, by guaranteeing one of approximate $\mms$ or approximate $\efx$/$\ef1$, one cannot obtain a good guarantee for the other notion of fairness for free.

\section{\boldmath Guaranteeing $\frac{2}{3}$-$\mms$}\label{sec:mms}

In this section, we describe and analyze the algorithm developed by \citet{amanatidis2017approximation} to compute $2/3$-$\mms$ allocations for fair division instances with additive valuations. We rewrite it and analyze it in our own words (in Algorithm~\ref{alg:approx-mms}) since we use it to develop our main algorithm (Algorithm~\ref{alg:approx-mms+efx}) to compute allocations that are both $2/3$-$\mms$ and $\efx$. 

\citeauthor{budish2011combinatorial}, while introducing maximin share, also showed that it is scale-invariant. That is, we can assume $\mms_i=1$ for all agents $i \in \N$. 

Surprisingly, Algorithm~\ref{alg:approx-mms} does \emph{not} rely on the two most commonly used tools for computing approximate $\mms$ allocations, namely \emph{ordered instances} and \emph{valid reductions}. 

\paragraph{Ordered Instances.} We say that a fair division instance is \emph{ordered} if we can rename the goods such that $v_i(g_1) \geq \ldots \geq v_i(g_m)$ holds true for all agents $i \in [n]$. \citet{barman2020approximation} showed that, for any $\alpha \in [0,1]$, if $\alpha$-$\mms$ allocations exist for ordered instances, then $\alpha$-$\mms$ allocations exist for all instances. Therefore, it becomes natural to assume ordered instances while studying $\alpha$-$\mms$ allocations, as done by many of the previous works in the literature \cite{garg2019approximating,garg2021improved,simple,akrami2024breaking}.

\paragraph{Valid Reductions.}\label{par:valid-reduction} Another common tool in computing $\alpha$-$\mms$ allocations, is valid reductions used in multiple works \cite{bouveret2016characterizing,kurokawa2016can,kurokawa2018fair,amanatidis2017approximation,ghodsi2018fair,garg2019approximating,garg2021improved,simple,akrami2024breaking}. Allocating a bundle $B$ to an agent $i$ and removing agent $i$ with $B$ from consideration to \emph{reduce} to a smaller instance is said to be \emph{valid}  if
\begin{itemize}
    \item $v_i(B) \geq \alpha \cdot \mms_i$, and
    \item $\mms^{n-1}_j(\M \setminus B) \geq \mms^n_j(\M)$ for all $j \neq i$.
\end{itemize}
In other words, allocating $B$ to agent $i$ is harmless since, not only $B$ satisfies $i$, but also removing $i$ and $B$ from the instance, it does not decrease the $\mms$ value of the remaining agents. Hence, one can, without loss of generality, allocate $B$ to $i$ and proceed with computing an $\alpha$-$\mms$ allocation in the reduced instance. 

Unfortunately, none of these tools can be used when dealing with envy-based notions of fairness. And hence, most of the previous works that achieve approximate $\mms$ guarantees do not obtain any envy-based criteria results. On the other hand, most of the previous work that achieve simultaneous guarantees for $\mms$ and $\efx$/$\ef1$ are obtained by manipulating algorithms that provide $\efx$/$\ef1$ guarantees so that some approximation for $\mms$ can also be achieved \citep{chaudhury2021little,amanatidis2020multiple}. However, so far, the envy-based algorithmic techniques have not been strong enough to also attain $2/3$-$\mms$ guarantee.


\begin{algorithm}[t]
    \caption{$\mathtt{approxMMS}(\I)$}\label{alg:approx-mms}
    \textbf{Input:} A fair division instance $\I = (\N, \M, \V)$ with additive valuations\\
    \textbf{Output:} An allocation $X$
    \begin{algorithmic}[1]
        \State Let $\mms_i = \mms^n_i(\M)$ for all $i \in [n]$
        \While{$\N \neq \emptyset$}
            \State $n \leftarrow |\N|$
            \State Let $i \in \N$
            \State Let $(X_1, \ldots, X_n)$ be a partition of $\M$ such that $v_i(X_j) \geq \frac{2}{3} \mms_i$
            \State Let $T_{\langle X,t \rangle}$ be the threshold-graph with $X=(X_1, \ldots, X_n)$ and $t=\frac{2}{3}(\mms_1, \ldots, \mms_n)$ for agents in $[n]$
            \State Let $M=\{(k+1,X_{k+1}), \ldots, (n,X_n)\}$ be a matching of size at least $1$ such that $N(\{X_{k+1}, \ldots, X_n\})=\{k+1, \ldots, n\}$ and $X_j$ is matched to $j$ for all $j \in [n]\setminus[k]$; \label{step:matching}
            \State $\N \leftarrow [k]$;
            \State $\M \leftarrow \M \setminus \bigcup_{\ell \in [n] \setminus [k]} X_\ell$;
        \EndWhile 
        \Return $(X_1, X_2, \ldots, X_n)$;
    \end{algorithmic}
\end{algorithm}

\noindent
\textbf{Overview of Algorithm~\ref{alg:approx-mms}:} Algorithm~\ref{alg:approx-mms} successively allocates a bundle of items to some selected agents in each step and removes them from consideration. In particular, in each round of Algorithm~\ref{alg:approx-mms} with $n'$ remaining agents, we ask a remaining agent $i$ to divide the remaining items into $n'$ bundles $X_1, \ldots, X_{n'}$, each of value at least $2/3$ to her. We prove, in Lemma~\ref{lem:reduction}, that the above is always possible at every step of the algorithm. Then, we consider the threshold graph $T_{\langle X, t \rangle}$ with $X=(X_1, \ldots, X_{n'})$ and $t=(\frac{2}{3}, \ldots, \frac{2}{3})$ and find a matching between the bundles and the agents such that (i) every matched agent has a value of at least $2/3$ for the bundle matched to her and (ii) every unmatched agent values any of the matched bundles at less than $2/3$. We allocate according to this matching, and remove the matched agents with their matched bundles. As long as there is any remaining agent, we repeat the above process. See Algorithm~\ref{alg:approx-mms} for the pseudo-code of this algorithm. A similar technique is also used in \cite{steinhaus1948problem,Kuhn,ordinalOne,hummel}.

\begin{restatable}[\citet{amanatidis2017approximation}]{theorem}{twothirdmms}\label{thm:2-3-mms}
    For fair division instances with additive valuations, Algorithm~\ref{alg:approx-mms} returns a $\frac{2}{3}$-$\mms$ allocation. 
\end{restatable}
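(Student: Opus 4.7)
I would argue by induction on the number of while-loop iterations, maintaining the invariant: at the start of each iteration, if $\N' \subseteq \N$ and $\M' \subseteq \M$ denote the surviving agents and items, then every $j \in \N'$ admits a partition of $\M'$ into $|\N'|$ bundles each of $v_j$-value at least $(2/3)\mms_j$. The base case ($\N'=\N$, $\M'=\M$) follows from $j$'s own MMS partition into $n$ bundles of value at least $\mms_j \geq (2/3)\mms_j$. This invariant immediately justifies line 5, because the chosen agent $i$ can always exhibit such a partition with $v_i(X_\ell) \geq (2/3)\mms_i$ for every $\ell$. Granted that line 7 produces a matching of size at least one, the $(2/3)$-MMS guarantee is then immediate from the threshold graph: any matched pair $(j, X_j)$ satisfies $v_j(X_j) \geq (2/3)\mms_j$ by construction, and the strict drop $n' \mapsto k < n'$ caps the loop at $n$ iterations. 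So the real content is to verify (a) the inductive step of the invariant and (b) the existence of the matching in line 7.

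For (a), the crucial output of line 7 is $N(\{X_{k+1},\ldots,X_{n'}\}) = \{k+1,\ldots,n'\}$, which certifies that every surviving agent $j \in [k]$ satisfies $v_j(X_\ell) < (2/3)\mms_j$ for each removed bundle $X_\ell$ with $\ell > k$. Starting from $j$'s inductively supplied $(2/3)$-MMS partition $(P_1,\ldots,P_{n'})$, I would remove the $n'-k$ thin bundles one at a time, each time invoking an auxiliary merging lemma (the content of the referenced Lemma~\ref{lem:reduction}): \emph{if $(Q_1,\ldots,Q_p)$ partitions some set with $v_j(Q_t) \geq (2/3)\mms_j$ for every $t$, and $B \subseteq \bigcup_t Q_t$ has $v_j(B) < (2/3)\mms_j$, then $\bigcup_t Q_t \setminus B$ admits a partition into $p-1$ parts each of $v_j$-value at least $(2/3)\mms_j$}. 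Iterating over the $n'-k$ thin bundles then gives $j$ a $k$-part partition, re-establishing the invariant.

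This auxiliary merging lemma is the main obstacle. I would prove it by working with the residual parts $Q_t \setminus B$: their total value is at least $p(2/3)\mms_j - v_j(B) > (p-1)(2/3)\mms_j$, so at most finitely many of them drop below the threshold. Calling the deficient ones $R_1,\ldots,R_r$ and the rest $R_{r+1},\ldots,R_p$, one has $\sum_{t \leq r} v_j(R_t) > (r-1)(2/3)\mms_j$ (since the non-deficient ones contribute at most $(p-r)$ times their individual value which already clears the threshold, so the deficit of the overall surplus concentrates on the deficient block). A greedy bag-filling on the $r$ deficient parts then repacks them into $r-1$ bundles each of value at least $(2/3)\mms_j$, while the remaining $p-r$ non-deficient parts stay intact, yielding the required $p-1$ bundles.

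For (b), agent $i$ is adjacent in the threshold graph to every bundle $X_1,\ldots,X_{n'}$ by the choice of her partition. Pick an inclusion-minimal non-empty $S \subseteq \{X_1,\ldots,X_{n'}\}$ with $|N(S)| \leq |S|$; one exists because $S = \{X_1,\ldots,X_{n'}\}$ itself has $|N(S)| \leq n'$. Minimality forces $|N(S')| \geq |S'|$ for every non-empty proper $S' \subseteq S$, so Hall's condition holds inside $S$ and we obtain a perfect matching between $S$ and $N(S)$ with $|N(S)| = |S| \geq 1$. Relabeling so that $S = \{X_{k+1},\ldots,X_{n'}\}$ and $N(S) = \{k+1,\ldots,n'\}$ produces exactly the matching required by line 7, closing the induction and the proof.
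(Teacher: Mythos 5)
Your overall skeleton (induction on iterations, reduce the partition step to a removal lemma, extract the matching from a minimal Hall violator) follows the paper, and your part (b) is essentially the paper's Lemma~\ref{lem:matching}. The problem is part (a): the auxiliary merging lemma you rely on is false as stated. Take $\mms_j=1$, $p=3$, and parts $Q_t=\{a_t,b_t\}$ with $v_j(a_t)=0.45$ and $v_j(b_t)=0.22$, so $v_j(Q_t)=0.67\geq 2/3$ for each $t$; let $B=\{b_1,b_2,b_3\}$ with $v_j(B)=0.66<2/3$. The residue is three items of value $0.45$ each, total $1.35>2\cdot\tfrac{2}{3}$, yet no partition of three such items into $p-1=2$ bundles gives both bundles value at least $2/3$ (one bundle necessarily holds a single item of value $0.45$). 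This also kills the final ``greedy bag-filling'' claim: $r$ deficient parts of total value exceeding $(r-1)\cdot\tfrac{2}{3}$ need not repack into $r-1$ bundles clearing the threshold, because value is carried by indivisible items. The structural reason your induction cannot be patched in place is that your invariant only records ``every part has value at least $\tfrac{2}{3}\mms_j$,'' which throws away the slack present in the base case (where parts have value at least $\mms_j$); after one removal the surviving parts may sit exactly at $\tfrac{2}{3}$, and a second thin bundle spread across several of them leaves nothing to merge with.

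The paper's Lemma~\ref{lem:reduction} avoids this by never iterating: it stays anchored to agent $i$'s \emph{exact} $\mms$-partition $(C_1,\ldots,C_n)$ (each part worth at least $1$ after normalization) and removes all $k$ thin bundles \emph{at once}, using the cumulative bound $\sum_{j\in[k]} v_i(A_j) < \tfrac{2}{3}k$. Parts are then classified by how much value they lost: those losing at most $\tfrac{1}{3}$ still clear $\tfrac{2}{3}$ on their own, those losing between $\tfrac{1}{3}$ and $\tfrac{2}{3}$ still retain $\tfrac{1}{3}$ and are merged in pairs, and a counting argument ($\tfrac{2}{3}k > \tfrac{1}{3}n_1 + \tfrac{2}{3}n_2$) shows this yields at least $n-k$ good bundles. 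If you replace your iterated single-removal lemma with this one-shot, cumulative version, the rest of your argument goes through.
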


First we give a very simple proof for Lemma \ref{lem:reduction} which was proven by \citet{PW2014} in a complicated fashion.
\begin{restatable}{lemma}{reduction}\label{lem:reduction}
    Fix an agent $i \in \N$ and some $k<n$. Consider $k$ bundles $A_1, \ldots, A_k \subseteq \M$ such that for all $j \in [k]$, we have $v_i(A_j) < \frac{2}{3}\cdot\mms^n_i(\M)$ for agent $i$. Then, there exists a partition $(B_1, \ldots, B_{n-k})$ of the remaining items in $\M \setminus \cup_{j \in [k]}A_j$ into $n-k$ bundles such that $v_i(B_j) \geq \frac{2}{3} \cdot \mms^n_i(\M)$ for all $j \in [n-k]$.
\end{restatable}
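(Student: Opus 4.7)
The plan is to prove the lemma by starting from an $\mms$-partition of $\M$ and re-partitioning what remains after removing $\bigcup_j A_j$. First, by scale invariance of $\mms$, normalize so that $\mms^n_i(\M) = 1$. Fix an $\mms$-partition $(P_1, \ldots, P_n)$ of $\M$ with $v_i(P_\ell) \geq 1$ for every $\ell$. Set $U := \bigcup_{j=1}^k A_j$, $R := \M \setminus U$, $a_\ell := v_i(P_\ell \cap U)$, and $r_\ell := v_i(P_\ell \cap R) = v_i(P_\ell) - a_\ell$. The hypothesis $v_i(A_j) < \tfrac{2}{3}$ for each $j$ gives $\sum_\ell a_\ell = v_i(U) < \tfrac{2k}{3}$, and hence $v_i(R) = \sum_\ell r_\ell > n - \tfrac{2k}{3}$, with $r_\ell \geq 1 - a_\ell$.

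Next I would identify the $\mms$-bundles that are already good on their own. Call $P_\ell$ \emph{heavy} if $a_\ell > \tfrac{1}{3}$ and \emph{light} otherwise. A Markov-type counting against $\sum_\ell a_\ell < \tfrac{2k}{3}$ shows that at most $2k-1$ bundles can be heavy, while every light bundle satisfies $r_\ell \geq 1 - \tfrac{1}{3} = \tfrac{2}{3}$. Thus there are at least $n - 2k + 1$ ``ready-made'' pieces of value $\geq \tfrac{2}{3}$. These already furnish many of the required output bundles, and the remaining task is to cover the items that lie in heavy $\mms$-bundles (plus any slack from merging light ones) into the remaining slots so that the total is $n-k$.

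For the remaining slots I would use a standard threshold bag-filling. Preprocess by removing any item $g \in R$ with $v_i(g) \geq \tfrac{2}{3}$ as a singleton bundle (this is already a valid $B_j$), which reduces to the case in which every item in $R$ has value $<\tfrac{2}{3}$. Run the greedy: process items in non-increasing value order, keep one open bag, and close it as soon as its total value reaches $\tfrac{2}{3}$. Because items have value $<\tfrac{2}{3}$, each closed bag lies in $[\tfrac{2}{3},\tfrac{4}{3})$, and any final unclosed bag has value $<\tfrac{2}{3}$. Counting the value produces enough closed bags when combined with the lower bound $v_i(R) > n - \tfrac{2k}{3}$ and the surplus guaranteed by the light $\mms$-bundles.

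The main obstacle is closing the counting uniformly for all $n, k$. The naive inequality $q \cdot \tfrac{4}{3} + \tfrac{2}{3} > v_i(R) > n - \tfrac{2k}{3}$ yields $q > \tfrac{3n-2k-2}{4}$, which alone is tight enough only when $n$ is not too large relative to $k$. The fix is to feed the bag-filler only the items of $R$ that lie in heavy $\mms$-bundles while reserving the light bundles as intact output pieces: since there are at most $2k-1$ heavy bundles contributing residual value $\sum_{\text{heavy }\ell} r_\ell \geq (\text{\# heavy}) - \tfrac{2k}{3}$, the bag-filler only needs to produce a small number of extra bundles, and its counting inequality becomes loose enough to guarantee the required total of $n-k$. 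I expect the combination of these two ingredients — many untouched light bundles and a bag-filler restricted to the heavy-bundle residues — to close the argument cleanly.
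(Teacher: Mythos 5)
Your first half coincides with the paper's proof: the paper fixes an $\mms$-partition $(C_1,\dots,C_n)$ of agent $i$, classifies its bundles by how much value $A=\cup_j A_j$ removes from each, and your ``light'' bundles are exactly its class $C^0$ (removed value at most $1/3$, hence residual value at least $2/3$). Where you diverge is in how the remaining slots are filled. The paper never bag-fills at the item level: it splits the heavy bundles into $C^1$ (removed value in $(1/3,2/3]$, so residual value at least $1/3$) and $C^2$ (removed value above $2/3$), pairs the residues of the $C^1$ bundles two at a time to obtain $\lfloor n_1/2\rfloor$ further bundles of value at least $2/3$, discards the $C^2$ residues, and closes with the single inequality $\tfrac{2}{3}k > \sum_j v_i(C_j\cap A) \geq \tfrac{1}{3}n_1+\tfrac{2}{3}n_2$, which yields $n_0+\lfloor n_1/2\rfloor\geq n-k$ immediately. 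Treating whole residues as indivisible units to be paired is what makes the counting trivial; decomposing them into items and re-aggregating greedily is what forces you into the delicate accounting you describe.

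Your route can in fact be pushed through, but as written it has a genuine gap at the singleton preprocessing step. Removing an item $g$ with $v_i(g)\geq\tfrac{2}{3}$ as a singleton buys one output bundle but deletes $v_i(g)$ --- which is unbounded --- from the pool, so the value inequality $q\cdot\tfrac{4}{3}+\tfrac{2}{3}>v_i(\text{pool})$ no longer certifies enough closed bags; ``reduces to the case in which every item has value $<\tfrac{2}{3}$'' is not a valid reduction by itself. The repair has to be per heavy bundle: let $S$ be the set of heavy bundles whose residue contains such a big item; these contribute $|S|$ singletons, and you feed the bag-filler only the residues of the other heavy bundles, whose total value exceeds $(h-|S|)-\tfrac{2k}{3}$ via $r_\ell\geq 1-a_\ell$. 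Then $|S|+q>\bigl(3h+|S|-2k-2\bigr)/4\geq h-k-1+\tfrac{2k+2-h}{4}>h-k-1$ using $h\leq 2k-1$, and integrality gives $|S|+q\geq h-k$. This closes the argument, but you did not carry out this computation (you only ``expect'' it to close), and you should also state that all leftover items --- the open bag and the residues you never used --- are dumped into an existing $B_j$, which is harmless for nonnegative additive valuations. I recommend adopting the paper's pairing argument, which sidesteps all of this.
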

  \paragraph{Proof idea.} We first give an overview of our idea before formally proving Lemma~\ref{lem:reduction}. Let us begin by fixing an $\mms$-partition of agent $i$ in the given instance. Then, given the $k$ bundles $A_1, \ldots, A_k \subseteq \M$ with the property as stated in Lemma~\ref{lem:reduction}, we categorize the bundles of this $\mms$-partition depending on how much value these bundles have after the removal of the items in $A_1 \cup \ldots \cup A_k$. All the bundles with remaining value of at least $2/3$ are in the set $C^0$ and all the bundles with remaining value at least $1/3$ and at most $2/3$ are in the set $C^1$. By pairing the bundles in $C^1$ and merging the items in this pair, note that, we manage to create $\lfloor \frac{|C^1|}{2} \rfloor$-many bundles of value at least $2/3$. Moreover, all the bundles in $C^0$ are of value at least $2/3$. So, proving that $|C^0| + \lfloor \frac{|C^1|}{2} \rfloor \geq n-k$ will suffice. We show it by upper-bounding the value of the removed items for agent $i$.
\begin{proof}
    For a given fair division instance $\I$ and an agent $i \in \N$, let $(C_1, \ldots, C_n)$ be an $\mms$-partition of $i$. Let $A = A_1 \cup \ldots \cup A_k$ and write $D_j = C_j \cap A$ and $C'_j = C_j \setminus A$ for all $j \in [n]$. We define $C^0$, $C^1$, and $C^2$ as following depending on the amount of value removed from $C_j$'s after the removal of the items in $A$. 
    \begin{itemize}
        \item $C^0 = \{j \in [n] \mid v_i(D_j) \leq \frac{1}{3}\}$.
        \item $C^1 = \{j \in [n] \mid \frac{1}{3} < v_i(D_j) \leq \frac{2}{3}\}$.
        \item $C^2 = \{j \in [n] \mid \frac{2}{3} < v_i(D_j)\}$.
    \end{itemize}
    Let $n_0 = |C^0|$, $n_1 = |C^1|$, and $n_2 = |C^2|$. Note that $n = n_0 + n_1 + n_2$. Without loss of generality, we assume that $C^0 = \{1, \ldots, n_0\}$, $C^1 = \{n_0+1, \ldots, n_0 + n_1\}$, and $C^2 = \{n_0+n_1+1, \ldots, n\}$. We aim to create $n-k$ many bundles $B_1, \dots, B_{n-k} \subset \M \setminus A$, each of value at least $2/3$ to agent $i$. To begin with, we set $B_j=C'_j$ for all $j \in [n_0]$. Next, we pair the bundles with indices in $C^1$ and merge the items in them. Formally, for all $j \in [\lfloor \frac{n_1}{2} \rfloor]$, we set $B_{n_0+j} = C'_{n_0+2j-1} \cup C'_{n_0+2j}$. 

    First, note that, for all $j \in C^0$, $v_i(B_j)=v_i(C'_j) \geq \frac{2}{3}$. Also, since for all $j \in C^1$, $v_i(C'_j) \geq \frac{1}{3}$, for all $\ell \in [n_0 + \lfloor \frac{n_1}{2} \rfloor]$, $v_i(B_j) \geq \frac{2}{3}$. That is, we have $v_i(B_j) \geq 2/3$ for all $j \in [n_0] \cup [n_1]$. Therefore, to complete our proof, it suffices to establish that $n_0 + \lfloor \frac{n_1}{2} \rfloor \geq n-k$.  We have,
    \begin{align*}
        \frac{2}{3} \cdot k &> \sum_{j \in [k]} v_i(A_j) \tag{$v_i(A_j) < 2/3$ for all $j \in [k]$} \\
        &= \sum_{j \in [n]} v_i(C_j \cap A) = \sum_{j \in [n]} v_i(D_j) \\
        &= \sum_{j \in [n_0]} v_i(D_j) + \sum_{j \in [n_1]\setminus[n_0]} v_i(D_j) + \sum_{j \in [n_2]\setminus[n_1]} v_i(D_j)\\
        &\geq \frac{1}{3} n_1 + \frac{2}{3} n_2.
    \end{align*}
    That is, $k > \frac{n_1}{2} + n_2$, or equivalently, $n-k < n - (\frac{n_1}{2} + n_2) = n_0 + \frac{n_1}{2}$. Therefore, we have  $n_0 + \lfloor \frac{n_1}{2} \rfloor \geq n-k$, as desired.
\end{proof}

Given a set of nodes $S$ in a threshold graph $T$, $N(S)$ is the set of all neighbors of the nodes in $S$.

\begin{restatable}{lemma}{matching}\label{lem:matching}
    For a given partition $X=(X_1, \ldots, X_n)$ of $\M$ and a threshold vector $t=(t_1, \ldots, t_n)$, assume there is an agent $i$ such that $v_i(X_j) \geq t_i$ for all $j \in [n]$. Then $T_{\langle X,t \rangle}$ has a non-empty matching $M=\{(i_1, X_{j_1}), \ldots, (i_k, X_{j_k})\}$ such that $N(\{X_{j_1}, \ldots, X_{j_k}\})=\{i_1, \ldots, i_k\}$ which can also be computed in polynomial time. 
\end{restatable}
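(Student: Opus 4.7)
The plan is to apply a classical K\"onig-style alternating-path argument to a maximum matching of the threshold graph $T_{\langle X,t\rangle}$. First, I compute any maximum matching $M^*$ between agents and bundles using a standard polynomial-time bipartite matching algorithm. The hypothesis that $v_i(X_j)\geq t_i$ for every $j$, i.e., that agent $i$ is adjacent to every bundle in the threshold graph, is the key structural input: whenever some bundle $X_{j_0}$ is left unmatched by $M^*$, agent $i$ itself must be matched, since the single edge $(i,X_{j_0})$ would otherwise be an augmenting path contradicting maximality.

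If $M^*$ saturates all $n$ bundles, then $|M^*|=n$ forces every agent to be matched as well; in particular every agent has an incident edge and so $N(\{X_1,\dots,X_n\})$ equals the entire agent set, making $M^*$ itself the required matching with $k=n$. Otherwise let $U_B$ be the nonempty set of unmatched bundles and let $Z_A$, $Z_B$ denote the sets of agents and bundles reachable from $U_B$ by $M^*$-alternating paths, where the first edge out of any bundle in $U_B$ is required to be non-matching. I then establish three standard properties: (a) every agent in $Z_A$ is matched by $M^*$, since otherwise one obtains an augmenting path from $U_B$; (b) $N(Z_B)\subseteq Z_A$, because for every neighbor $a$ of a reachable bundle $X_j\in Z_B$, either $(a,X_j)\notin M^*$ and the alternating path extends to $a$, or $(a,X_j)\in M^*$ and $a$ is the unique predecessor of $X_j$ on the alternating path already lying in $Z_A$; (c) $M^*$ restricts to a perfect matching between $Z_A$ and $Z_B\setminus U_B$. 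Setting $B'=Z_B\setminus U_B$ and $A'=Z_A$, the restriction of $M^*$ to these sets is the desired matching: (b) gives $N(B')\subseteq A'$, and (c) provides, for every $a\in A'$, a neighbor in $B'$, hence $N(B')=A'$. Nonemptiness of $B'$ follows because $i\in Z_A$ (reached from any bundle in $U_B$ via the non-matching edge $(i,X_{j_0})$), so its $M^*$-match is forced to lie in $B'$.

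The main obstacle is the careful case analysis in property (b): one has to verify that every neighbor of a reachable bundle is itself reachable regardless of whether the incident edge is matching or non-matching, using that each bundle is incident to at most one matching edge and that bundles in $U_B$ are incident to no matching edges at all. Once (a)--(c) are in place, polynomial-time computability is immediate: $M^*$ is found in polynomial time and $Z_A,Z_B$ are obtained by a single BFS on the alternating-path auxiliary graph.
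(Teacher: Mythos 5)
Your proof is correct, and it reaches the paper's conclusion by a genuinely different (though closely related) route. The paper also begins with a maximum matching $M^*$, but when $M^*$ is not perfect it works on the Hall side: it computes an inclusion-minimal Hall violator $S$ among the bundles, uses the hypothesis on agent $i$ only to note that $N(S)\neq\emptyset$ and hence $|S|\geq 2$, and then exploits minimality to show that dropping one bundle yields a set $T$ with $|N(T)|=|N(S)|=|T|$ that can be perfectly matched to $N(T)=N(S)$; the matched bundles then see no agents outside the matching. You instead run the K\"onig/Berge alternating-path decomposition from the unmatched bundles $U_B$, and your sets play exactly the roles of the paper's objects: $Z_B$ is a Hall violator with $N(Z_B)=Z_A$ and deficiency $|U_B|$, and $B'=Z_B\setminus U_B$ matched to $Z_A$ corresponds to the paper's $T$ matched to $N(T)$. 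The trade-off is mostly presentational: your argument is more self-contained and directly algorithmic (one BFS on the alternating-path graph certifies properties (a)--(c) and the set $B'$ simultaneously), whereas the paper's argument is shorter on the page but delegates the computation of a minimal Hall violator to a cited subroutine and leans on the minimality property rather than exhibiting the reachability structure explicitly. Your nonemptiness argument also uses the hypothesis on agent $i$ slightly differently --- you place $i$ in $Z_A$ via an edge to an unmatched bundle and take its matched partner, while the paper uses $i$ only to bound $|S|$ from below --- but both uses are sound and both yield polynomial-time computability.
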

\begin{proof}
    First we compute a maximum matching $M^*$ in $T_{\langle X,t \rangle}$ (which can be done in polynomial time).
    If $M^*$ is a perfect matching between $[n]$ and $(X_1, \ldots, X_{n})$, then clearly the lemma holds. Otherwise, there must exists a Hall's violator set $S \subset \{X_1, \ldots, X_n\}$ with $|N(S)| < |S|$ which can be computed in polynomial time. Compute a minimal such set $S$ by removing elements one by one if after the removal still the inequality ($|N(S)| < |S|$) holds. Note that $i \in N(S)$ and hence, $|S| \geq 2$. By minimality of $S$, the Hall's condition holds for any proper subset of $S$. Let $T \subset S$ and $|T| = |S|-1$. We have $|S|-1 = |T| \leq |N(T)| \leq |N(S)| < |S|$. Hence $|N(T)| = N(S) = |S|-1$. Since the Halls' condition holds for $T$, there exists a matching $M$ covering $N(T)$. Since $N(T) \subseteq N(S)$ and $|N(T)|=|N(S)|$, we have that $M$ is covering $N(S)$. Therefore, there exists no edge between the agents outside $M$ and the bundles inside $M$. 
\end{proof}

\paragraph{Proof of Theorem~\ref{thm:2-3-mms}}
Now we are ready to prove \ref{thm:2-3-mms}. To begin with, note that, if Algorithm~\ref{alg:approx-mms} terminates, each agent $i \in [n]$ is matched to (and allocated) say, bundle $X_i$ in some threshold-graph with the threshold of agent $i$ being $\frac{2}{3} \cdot \mms_i$, i.e., $v_i(X_) \geq \frac{2}{3} \cdot \mms_i$. Thus, if the algorithm terminates, it must return a $\frac{2}{3} \mms$ allocation. 

    Now, in order to prove the termination of Algorithm \ref{alg:approx-mms}, we prove that in each iteration of the while-loop, the set-size $|\N|$ of the remaining agents strictly decreases. Therefore, the while-loop can iterate for at most $n$ many times and hence Algorithm~\ref{alg:approx-mms} terminates. 
    
    Consider an arbitrary iteration of the while-loop in Algorithm~\ref{alg:approx-mms}. Let $\N$ and $\M$ be the initial set of agents and items respectively and let $\N'=[n']$ and $\M'$ be the set of remaining agents and items respectively in the beginning of this iteration of the while-loop. If $\N' = \emptyset$, then the algorithm obviously terminates. 
    
    Let us now consider some agent $i \in \N'$. First, we prove that there exists a partition $(X_1, \ldots, X_{n'})$ of $\M'$ such that $v_i(X_j) \geq \frac{2}{3} \cdot \mms_i$ for all $j \in [n']$. We write $X_{n'+1}, \ldots, X_n$ to denote the bundles that are matched to agents $n'+1, \ldots, n$ in the previous iterations of the while-loop. Furthermore, by the choice of matching $M$ in Step~\ref{step:matching} of Algorithm~\ref{alg:approx-mms}, agent $i$ did not have an edge (in the then threshold graphs) to any of these bundles $X_{n'+1}, \ldots, X_n$. That is, $v_i(X_j) < \frac{2}{3} \cdot \mms_i$ for all $j \in [n] \setminus [n']$. By Lemma \ref{lem:reduction}, there exists a partitioning $(X_1, \ldots, X_{n'})$ of $\M'$ such that $v_i(X_j) \geq \frac{2}{3}\mms_i$ for all $j \in [n']$.

    By Lemma \ref{lem:matching}, there exists a matching $M$ with no edge between the agents outside $M$ and the bundles inside $M$. Now without loss of generality, by renaming the agents and bundles, assume $M=\{(k+1, X_{k+1}), \ldots, (n', X_{n'}))\}$. Since we know $M \neq \emptyset$, $k < n'$ and thus the number of remaining agents decreases in the end of this iteration of the while-loop. \qed

Using Proposition \ref{prop:ptas}, it is easy to see that there exists a PTAS to compute the partition $(B_1, \ldots, B_{n-k})$ in Lemma \ref{lem:reduction}. Formally, the following lemma holds.

\begin{lemma}\label{lem:reduction-poly}
Fix an agent $i \in \N$ and some $k<n$. Consider $k$ bundles $A_1, \ldots, A_k \subseteq \M$ such that for all $j \in [k]$, we have $v_i(A_j) < \frac{2}{3}\cdot\mms^n_i(\M)$ for agent $i$. Then, a partition $(B_1, \ldots, B_{n-k})$ of the remaining items in $\M \setminus \cup_{j \in [k]}A_j$ into $n-k$ bundles can be computed in polynomial time such that $v_i(B_j) \geq (\frac{2}{3}-\varepsilon) \cdot \mms^n_i(\M)$ for all $j \in [n-k]$ and all constant $\varepsilon>0$.
\end{lemma}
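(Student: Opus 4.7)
The plan is to mimic the proof of Lemma~\ref{lem:reduction} almost verbatim, replacing the (NP-hard to compute) $\mms$-partition of agent $i$ by a $(1-\varepsilon')$-$\mms$-partition for an appropriately small $\varepsilon'>0$, which Proposition~\ref{prop:ptas} guarantees can be found in polynomial time. The only thing to verify is that the slack we introduce by working with an approximate partition can be absorbed into the final $2/3-\varepsilon$ bound without breaking the counting step.

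Concretely, normalize $\mms_i=1$ and set $\varepsilon' \coloneq \varepsilon/2$. First I would invoke Proposition~\ref{prop:ptas} to compute, in polynomial time, a partition $(C_1,\ldots,C_n)$ of $\M$ with $v_i(C_j) \geq 1-\varepsilon/2$ for all $j \in [n]$. Writing $A = \bigcup_{\ell \in [k]} A_\ell$, $D_j = C_j \cap A$, and $C'_j = C_j \setminus A$, I would then classify the indices according to the thresholds
\begin{align*}
C^0 &= \{j : v_i(D_j) \leq \tfrac{1}{3}+\tfrac{\varepsilon}{2}\},\\
C^1 &= \{j : \tfrac{1}{3}+\tfrac{\varepsilon}{2} < v_i(D_j) \leq \tfrac{2}{3}\},\\
C^2 &= \{j : v_i(D_j) > \tfrac{2}{3}\},
\end{align*}
and form the output bundles just as in Lemma~\ref{lem:reduction}: take $B_j = C'_j$ for each $j \in C^0$, and pair the elements of $C^1$ arbitrarily, merging each pair into a single bundle.

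The routine checks are: for $j \in C^0$, $v_i(C'_j) \geq (1-\varepsilon/2)-(\tfrac{1}{3}+\tfrac{\varepsilon}{2}) = \tfrac{2}{3}-\varepsilon$; and for two paired indices in $C^1$, the merged bundle has value at least $2(1-\varepsilon/2) - 2\cdot\tfrac{2}{3} = \tfrac{2}{3}-\varepsilon$. So whatever $(n-k)$-many bundles we end up producing meet the desired threshold.

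The part that needs a moment's care is the counting step, which must still show that we actually produce at least $n-k$ bundles, i.e.\ $n_0 + \lfloor n_1/2 \rfloor \geq n-k$. Using $\sum_{j} v_i(D_j) = v_i(A) = \sum_{\ell \in [k]} v_i(A_\ell) < 2k/3$ from the hypothesis, together with the lower bounds $v_i(D_j) > \tfrac{1}{3}+\tfrac{\varepsilon}{2}$ on $C^1$ and $v_i(D_j) > \tfrac{2}{3}$ on $C^2$, we get
\[
\tfrac{2k}{3} > \bigl(\tfrac{1}{3}+\tfrac{\varepsilon}{2}\bigr) n_1 + \tfrac{2}{3} n_2 \geq \tfrac{1}{3} n_1 + \tfrac{2}{3} n_2,
\]
which is the same strict inequality $n_1 + 2n_2 < 2k$ as in the exact case. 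Because the left-hand side is integer-valued, this forces $n_1 + 2n_2 \leq 2k-1$, from which the standard case analysis on the parity of $n_1$ yields $\lceil n_1/2\rceil + n_2 \leq k$, i.e.\ $n_0 + \lfloor n_1/2\rfloor \geq n-k$, exactly as before.

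The only (mild) obstacle is the choice of the buffer $\varepsilon/2$ in the threshold defining $C^0$: it has to be large enough that the $\varepsilon/2$ loss in the PTAS still leaves $\tfrac{2}{3}-\varepsilon$ in each $C'_j$, yet small enough that raising $\tfrac{1}{3}$ to $\tfrac{1}{3}+\tfrac{\varepsilon}{2}$ in the counting inequality above still permits the conclusion $n_1 + 2n_2 < 2k$. Setting $\varepsilon' = \varepsilon/2$ balances both requirements, and the remainder is essentially identical to Lemma~\ref{lem:reduction}.
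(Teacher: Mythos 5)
Your proposal is correct and follows essentially the same route as the paper: invoke the PTAS of Proposition~\ref{prop:ptas} to get a $(1-\varepsilon/2)$-$\mms$-partition of agent $i$ in polynomial time, then rerun the classification-and-pairing argument of Lemma~\ref{lem:reduction}, checking that the counting inequality survives and that each resulting bundle retains value at least $(\tfrac{2}{3}-\varepsilon)\cdot\mms^n_i(\M)$. The paper states this in two sentences; you have simply filled in the threshold bookkeeping explicitly (and in fact even the unshifted threshold $\tfrac{1}{3}$ for $C^0$ would suffice, since it already yields $\tfrac{2}{3}-\tfrac{\varepsilon}{2}$ there), so no further changes are needed.
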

\begin{proof}
    In the proof of Lemma \ref{lem:reduction}, if an $\mms$-partition of $i$ is given, computing the bundles in $C^j$ for all $j \in \{0,1,2\}$ and consequently obtaining $(B_1, \ldots, B_{n-k})$ can be done in polynomial time. Fix a constant $\varepsilon>0$ and an agent $i$. By Proposition \ref{prop:ptas}, a $(1-\varepsilon/2)$-$\mms$ partition of $i$ can be computed in polynomial time and following the same arguments, a partition $(B_1, \ldots, B_{n-k})$ of the remaining items into $n-k$ bundles can be computed in polynomial time such that $v_i(B_j) \geq (\frac{2}{3}-\varepsilon) \cdot \mms^n_i(\M)$.
\end{proof}

\section{\boldmath $\frac{2}{3}$-$\mms$ Together with $\efx$}\label{sec:main-1}

In this section, we modify Algorithm \ref{alg:approx-mms} such that the output is a (partial) allocation which is still $2/3$-$\mms$ and now becomes $\efx$ as well.  Note that, in Algorithm \ref{alg:approx-mms} and generally in the algorithmic technique of \citet{amanatidis2017approximation}, once an agent receives a bundle $X_i$, $X_i$ does become her bundle in the final output allocation. So, once agent $i$ receives the bundle $X_i$, she is out of the consideration. This guarantees that agent $i$ will have the same utility $v_i(X_i)$ in the end of the algorithm but it does not guarantee anything about how much $i$ values other bundles formed once she is removed from consideration. Therefore, it cannot not guarantee $\efx$ (or even $\ef1$) property. 

We overcome this barrier by developing Algorithm \ref{alg:approx-mms+efx} in this section. Here, we again allocate a bundle of items to some selected agents in each step, but we modify them carefully in a later stage. As we will describe next, this feature of our algorithm removes the \emph{myopic} nature of Algorithm~\ref{alg:approx-mms} and lets us achieve envy-based fairness guarantees, while maintaining $2/3$-$\mms$ guarantees. \\

\noindent
\textbf{Overview of Algorithm~\ref{alg:approx-mms+efx}:} In each round of Algorithm~\ref{alg:approx-mms+efx} with $n' \leq n$ remaining agents, we ask a remaining agent $i$ to partition the remaining goods into $n'$ bundles $X_1, \ldots, X_{n'}$ of value at least $(2/3)\mms_i$. We prove, in Lemma~\ref{lem:reduction}, that it is always feasible to perform the above process at every step of the algorithm. We then shrink these bundles to guarantee that every remaining agent values each strict subset of these bundles less than $2/3$ fraction of their $\mms$ value. For simplicity, we rename the shrinked bundles again as $X_1, \ldots, X_{n'}$. 

Now, let us assume that, after the process of shrinking, we still have an agent $j$ who was allocated a bundle in previous iterations and who strongly envies one of $X_1, \ldots, X_{n'}$, say, for instance, $X_j$. Let us denote $a^*$ to be the most envious agent of $X_j$. We allocate, to $a^*$, a subset of $X_j$ which $a^*$ envies but no agent strongly envies. In this way, we guarantee two things at each point during the algorithm, the current (partial) allocation among the agents who received a bundle so far is (a) $\efx$ and (b) all these agents receive $2/3$ fraction of their $\mms$ value. See Algorithm \ref{alg:approx-mms+efx} for the pseudocode.

To the best of our knowledge, none of the previous algorithms computing an $\efx$ allocation allocates a bundle to some of the agents and nothing to the rest in an intermediate step. It might also seem counter-intuitive to do so, since we need to guarantee that there are enough items left to satisfy the agents who have received nothing so far. We are able to make it possible in Algorithm~\ref{alg:approx-mms+efx}, since we know that all the remaining agents (who have not yet received anything) value all the already allocated bundles less than $2/3$ fraction of their $\mms$ value. Interestingly enough, the share-based guarantee that we are maintaining helps us to prove envy-based guarantees as well.

\begin{algorithm}[t]
    \caption{$\mathtt{approxMMSandEFX}(\I)$}\label{alg:approx-mms+efx}
    \textbf{Input:} A fair division instance $\I = (\N, \M, \V)$ with additive valuations\\
    \textbf{Output:} An allocation $X$
    \begin{algorithmic}[1]
        \State Let $\mms_i = \mms^n_i(\M)$ for all $i \in [n]$
        \State $\N' \leftarrow [n]$
        \While{$\N' \neq \emptyset$}\label{while}
            \State $n' \leftarrow |\N'|$
            \State Let $i \in \N'$
            \State Let $(X_1, \ldots, X_{n'})$ be a partition of $\M$ such that $v_i(X_j) \geq \frac{2}{3} \mms_i$ \label{line:mms_partition}
            \For{$j \in [n']$}
                \State $X_j \leftarrow$ minimal subset of $X'_j \subseteq X_j$ such that $\exists i' \in [n']$ with $v_{i'}(X'_j) \geq \frac{2}{3}\mms_{i'}$ \label{line}
                \If{$\exists a \in [n]\setminus[n']$ such that $a$ strongly envies $X_j$}\label{if}
                    \State Let $a^* \in [n]\setminus[n']$ be a most envious agent of $X_j$ \label{line:choice}
                    \State Let $X'_j \subseteq X_j$ be minimal such that $v_{a^*}(X'_j)>v_{a^*}(X_{a^*})$ and no agent strongly envies $X'_j$ \label{line:a*}
                    \State $\M \leftarrow \M \cup X_{a^*} \setminus X'_j$
                    \State $X_{a^*} \leftarrow X'_j$
                    \State Go to Line \ref{while}
                \EndIf
            \EndFor
            
            \State Let $T_{\langle X,t \rangle}$ be the threshold-graph with $X=(X_1, \ldots, X_{n'})$ and $t=\frac{2}{3}(\mms_1, \ldots, \mms_{n'})$ for agents in $[n']$ \label{line:threshold}
            \State Let $M=\{(k+1,X_{k+1}), \ldots, (n',X_{n'})\}$ be a matching of size at least $1$ such that $N(\{X_{k+1}, \ldots, X_n\})=\{k+1, \ldots, n\}$ and $X_j$ is matched to $j$ for all $j \in [n]\setminus[k]$;\label{line:matching}
            \State $\N' \leftarrow [k]$;
            \State $\M \leftarrow \M \setminus \bigcup_{\ell \in [n] \setminus [k]} X_\ell$;
        \EndWhile    
        \Return $(X_1, X_2, \ldots, X_n)$;
    \end{algorithmic}
\end{algorithm}

\begin{restatable}{lemma}{matchingefx}\label{lem:matching_efx}
    For a given partition $X \in \Pi_n$ of $\M$ and a threshold vector $t=(t_1, \ldots, t_n)$, assume for all $j \in [n]$, there exists an agent $i$ such that $v_i(X_j ) \geq t_i$. Then $T_{\langle X,t \rangle}$ has a non-empty matching $M=\{(i_1, X_{j_1}), \ldots, (i_k, X_{j_k})\}$ such that $N(\{X_{j_1}, \ldots, X_{j_k}\})=\{i_1, \ldots, i_k\}$. Moreover, $M$ can be computed in polynomial time. 
\end{restatable}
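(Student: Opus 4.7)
The plan is to adapt the proof of Lemma \ref{lem:matching} almost verbatim, isolating the one place where the weaker hypothesis forces a small change of argument. In Lemma \ref{lem:matching}, the distinguished agent $i$ adjacent to every bundle was used only to force any Hall-violating set on the bundle side to have size at least two. Under our current (strictly weaker) hypothesis, the condition ``every bundle has at least one neighbor in $T_{\langle X, t \rangle}$'' plays exactly that same role, so the overall skeleton of the proof goes through.

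Concretely, I would first compute a maximum matching $M^*$ in $T_{\langle X, t \rangle}$ in polynomial time via any standard bipartite matching algorithm. If $|M^*| = n$, then $M^*$ saturates both sides and I would take $M = M^*$, for which the required neighborhood condition is vacuous. Otherwise, invoking K\"onig's theorem, a minimum vertex cover $C$ has size strictly less than $n$, so the set $S$ of bundles not in $C$ is non-empty; since all edges from $S$ must land in the agent part of $C$, we get $|N(S)| \leq |C| - |C \cap \{X_1,\dots,X_n\}| < n - |C \cap \{X_1,\dots,X_n\}| = |S|$, giving a Hall's violator on the bundle side that is computable in polynomial time. I would then shrink $S$ greedily by removing elements one at a time as long as the condition $|N(S)| < |S|$ continues to hold, producing an inclusion-minimal Hall's violator.

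The main (and minor) obstacle relative to Lemma \ref{lem:matching} is to reestablish $|S| \geq 2$ after minimization. Here I would invoke the new hypothesis directly: every bundle has at least one neighbor, so $|S| = 1$ would force $|N(S)| \geq 1 = |S|$, contradicting the violator condition. From this point on the argument is identical to Lemma \ref{lem:matching}: pick $T \subsetneq S$ with $|T| = |S| - 1$; by minimality of $S$, Hall's condition holds for $T$, and the sandwich $|S|-1 = |T| \leq |N(T)| \leq |N(S)| < |S|$ forces $|N(T)| = |N(S)| = |S|-1$. Hence a matching $M$ saturating both $T$ and $N(T) = N(S)$ exists and can be found in polynomial time, and because $N(T)$ is the entire neighborhood of $T$, no agent outside $N(T)$ is adjacent to any bundle in $T$, which is precisely the required property. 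Finally, $M$ is non-empty since $|T| = |S|-1 \geq 1$.
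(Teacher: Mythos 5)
Your overall strategy is the same as the paper's: compute a maximum matching, extract a Hall violator $S$ on the bundle side, pass to an inclusion-minimal violator, use the hypothesis to get $|S|\geq 2$, and take $T=S\setminus\{x\}$. The one place you deviate is the minimization step, and that is where the proof breaks. Removing elements of $S$ one at a time as long as $|N(S)|<|S|$ survives does \emph{not} yield an inclusion-minimal Hall violator, and the remainder of your argument genuinely needs inclusion-minimality: ``Hall's condition holds for $T$'' means $|N(T')|\geq|T'|$ for \emph{every} $T'\subseteq T$, whereas your greedy stopping rule only certifies this for subsets obtained by deleting a single element of $S$. Concretely, take $n=4$ with $N(X_1)=N(X_2)=\{1\}$ and $N(X_3)=N(X_4)=\{2,3\}$, agent $4$ isolated; every bundle has a neighbour, so the hypothesis holds, the maximum matching has size $3$, and $C=\{1,2,3\}$ is a legitimate minimum vertex cover, giving $S=\{X_1,X_2,X_3,X_4\}$ with $|N(S)|=3<4$. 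Deleting any single bundle leaves a set with exactly three neighbours, so your greedy loop halts at $S$ itself, yet $\{X_1,X_2\}$ is a strictly smaller violator. For $T=\{X_1,X_2,X_3\}$ Hall's condition fails ($|N(\{X_1,X_2\})|=1$), no matching saturates $T$, and the final step collapses. The paper avoids this by invoking a cited algorithm that computes a genuinely inclusion-minimal Hall violator rather than this greedy heuristic (note that its own proof of Lemma~\ref{lem:matching} uses the same greedy phrasing and is open to the same objection).

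If you want to stay self-contained, the cleanest repair is to skip minimization entirely and read the matching off the alternating-path structure you already have. Let $Z$ be the set of vertices reachable from the unmatched bundles by $M^*$-alternating paths, let $S=Z\cap\{X_1,\dots,X_n\}$, and let $B\subseteq S$ be the matched bundles in $Z$. Then $N(S)=Z\cap[n]$, every agent in $N(S)$ is matched (else there is an augmenting path) and matched into $B$, so $M^*$ restricted to $B$ is a perfect matching between $B$ and $N(B)=N(S)$; moreover $B\neq\emptyset$ because some bundle is unmatched and by hypothesis it has a neighbour, which then lies in $N(S)$. This yields the required matching directly and in polynomial time, without ever needing a minimal Hall violator.
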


\begin{proof}
    First we compute a maximum matching $M^*$ in $T_{\langle X,t \rangle}$ (which can be done in polynomial time).
	If $M^*$ is a perfect matching between $[n]$ and $(X_1, \ldots, X_{n})$, then clearly the lemma holds. Otherwise, there must exists a Hall's violator set $S \subset \{X_1, \ldots, X_n\}$ with $|N(S)| < |S|$. A minimal such set $S$ can be computed in polynomial time \cite{GAN2019104}. Note that for all $X_j \in S$, there exists an agent $i$ such that $v_i(X_j) \geq t_i$ and hence, $|S| \geq 2$. By minimality of $S$, the Hall's condition holds for any proper subset of $S$. Let $T \subset S$ and $|T| = |S|-1$. We have $|S|-1 = |T| \leq |N(T)| \leq |N(S)| < |S|$. Hence $|N(T)| = N(S) = |S|-1$. Since the Halls' condition holds for $T$, there exists a matching $M$ covering $N(T)$. Since $N(T) \subseteq N(S)$ and $|N(T)|=|N(S)|$, we have that $M$ is covering $N(S)$. Therefore, there exists no edge between the agents outside $M$ and the bundles inside $M$. 
\end{proof}

\begin{restatable}{theorem}{thmOne}\label{thm:efx+MMS-1}
    For any fair division instance with additive valuations, Algorithm \ref{alg:approx-mms+efx} returns a (partial) allocation that is both $\efx$ and $2/3$-$\mms$.
\end{restatable}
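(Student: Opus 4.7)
The plan is to identify three invariants that hold at the start of every iteration of the outer while-loop of Algorithm~\ref{alg:approx-mms+efx}, and to argue termination separately:
(I) for every $i \in [n]\setminus\N'$, $v_i(X_i) \geq (2/3)\mms_i$;
(II) the partial allocation restricted to $[n]\setminus\N'$ is $\efx$, i.e., no such agent strongly envies any other such agent;
(III) for every $i \in \N'$ and every $j \in [n]\setminus\N'$, $v_i(X_j) < (2/3)\mms_i$.
On termination $\N'=\emptyset$, so (I) gives $(2/3)$-$\mms$ and (II) gives $\efx$, proving the theorem. All three invariants hold trivially at initialization ($\N'=[n]$). Moreover, whenever the loop reaches Line~\ref{line:mms_partition}, the required partition exists by Lemma~\ref{lem:reduction} applied with the bundles $\{X_j : j \in [n]\setminus\N'\}$ as $A_1,\ldots,A_k$: invariant (III) supplies exactly the hypothesis of that lemma.

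For the inductive step through the ``if'' branch (Lines~\ref{if}--\ref{line:a*}), only $X_{a^*}$ is modified. Since $v_{a^*}(X'_j) > v_{a^*}(X_{a^*,\text{old}})$ by construction, (I) is preserved. For (II), bundles of agents other than $a^*$ are unchanged, $a^*$'s own value strictly increased (so $a^*$'s outgoing strong-envy pairs only shrink), and the minimality clause in Line~\ref{line:a*} guarantees no agent strongly envies the new bundle $X'_j$. For (III), entry into the ``if'' branch requires some $a \in [n]\setminus\N'$ to strongly envy the shrunk $X_j$, forcing $X'_j$ to be a \emph{strict} subset of that shrunk bundle; by the minimality imposed in Line~\ref{line}, every strict subset of the shrunk $X_j$ is valued strictly below $(2/3)\mms_i$ by every $i \in \N'$, which is exactly what (III) requires for the new $X_{a^*}$.

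For the ``else'' branch (Lines~\ref{line:threshold}--\ref{line:matching}), every newly matched agent $j \in [n']\setminus[k]$ receives $X_j$ with $v_j(X_j)\geq(2/3)\mms_j$ by its threshold edge, giving (I). The key check for (II) is that the newly matched agents introduce no strong envy. If a newly matched $i$ strongly envied another newly matched $i'$, then $X_{i'}\setminus g$ would be a strict subset of the shrunk $X_{i'}$ valued $\geq (2/3)\mms_i$ by $i \in \N'$, contradicting the minimality in Line~\ref{line}. A newly matched $i$ cannot strongly envy a previously-allocated $j$ because invariant~(III) before the matching combined with (I) give $v_i(X_j) < (2/3)\mms_i \leq v_i(X_i)$, so certainly $v_i(X_j \setminus g) < v_i(X_i)$. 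Strong envy from a previously-allocated agent toward a newly matched bundle is ruled out because the for-loop reached the matching step without the ``if'' triggering for any $X_j$. Finally, (III) for the new $\N'=[k]$ follows from Lemma~\ref{lem:matching_efx}: unmatched agents have no threshold-graph edges to matched bundles, and for still-previously-allocated bundles, (III) is inherited from the previous iteration.

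The main obstacle is termination. The ``else'' branch strictly decreases $|\N'|$ (the matching has size $\geq 1$), so it executes at most $n$ times overall; the concern is therefore a potentially unbounded sequence of ``if'' iterations between two successive ``else'' executions. I plan to use the potential $\Phi = \sum_{i \in [n]\setminus\N'} v_i(X_i)$: the ``if'' branch leaves $\N'$ and every $v_i(X_i)$ for $i \neq a^*$ unchanged while replacing $X_{a^*}$ by a bundle of strictly greater $v_{a^*}$-value, so $\Phi$ strictly increases. With integer- or rational-valued valuations (the standard setting), $\Phi$ lies in the bounded range $[0, \sum_i v_i(\M)]$ and takes only finitely many values, forcing any run of consecutive ``if'' iterations to be finite. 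Consequently the algorithm terminates, and by the invariants the final allocation is simultaneously $\efx$ and $(2/3)$-$\mms$.
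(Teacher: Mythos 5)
Your proof is correct and follows essentially the same route as the paper's: induction over the while-loop iterations maintaining that the allocation restricted to already-served agents is $2/3$-$\mms$ and $\efx$, the same two-case analysis (if-branch fires vs.\ matching step), and the same potential-function termination argument (the paper uses the lexicographic pair $(\sum_{i\in A} v_i(X_i), |A|)$, which is your $\Phi$ plus the count of served agents). The only difference is presentational: you state invariant (III) explicitly and observe that it is exactly what licenses the application of Lemma~\ref{lem:reduction} at Line~\ref{line:mms_partition}, a point the paper's proof of Theorem~\ref{thm:efx+MMS-1} leaves implicit.
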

\begin{proof}
    We will begin by proving the correctness of Algorithm~\ref{alg:approx-mms+efx}, and then prove that it always terminates. 
    
    Consider any arbitrary iteration of the while-loop during the execution of Algorithm~\ref{alg:approx-mms+efx}. Let us assume there are $n'$ remaining agents at the start of this iteration. Without loss of generality, we can rename these remaining agents as $1, 2, \dots, n'$. This means that every agent $i \in [n]\setminus [n']$, has been assigned some bundle, say $X_i$.
    We begin by proving that $v_i(X_i) \geq \frac{2}{3}\mms_i$ and that agent $i$ does not strongly envy any agent $j \in [n]\setminus [n']$.
    
    We establish the above claim by induction. Since, initially no agent is assigned any bundle, the claim holds. Now, as the induction hypothesis, we assume that agents in $[n] \setminus [n']$ are already assigned a bundle and the (partial) allocation restricted to them is $2/3$-$\mms$ and $\efx$. Any change in the bundles as a result of the current while-loop can be examined by the following two cases: either the if-condition in Line \ref{if} is satisfied, and hence, only the bundle of agent $a^*$ changes in this iteration, or the if-condition in Line \ref{if} is not satisfied. 
    \begin{itemize}
        \item{\textit{Case 1: The if-condition in Line \ref{if} is satisfied.}} \label{case1}         
        First, note that, only the bundle of agent $a^*$ changes in this case. Let $X_{a^*}$ and $X'_{a^*}$ be the bundle of agent $a^*$ before and after this iteration of the while-loop respectively. By Line~\ref{line:a*}, we know that $v_{a^*}(X'_{a^*}) > v_{a^*}(X_{a^*}) \geq \frac{2}{3}\mms_{a^*}$, and hence, the allocation restricted to $[n]\setminus[n']$ is still $2/3$-$\mms$. Moreover, by the choice of $X'_{a^*}$ in Line~\ref{line:choice}, no agent in $[n]\setminus[n']$ strongly envies $X'_{a^*}$. Since $a^*$ did not strongly envy anyone while owning $X_{a^*}$, she still does not strongly envy anyone while owning $X'_{a^*}$. Hence, the allocation restricted to the set of agents in $[n] \setminus [n']$ is $\efx$ and $2/3$-$\mms$.\\
        
        \item{\textit{Case 2: The if-condition in Line \ref{if} is not satisfied.}} \label{case2}   
        Using Lemma~\ref{lem:matching_efx}, we know that the threshold-graph considered in Line~\ref{line:threshold} contains a matching $M$ of size at least one, such that, no unmatched agent has an edge to a matched bundle. 
      
        Now, without loss of generality, we rename the agents and bundles such that $M=\{(k+1, X_{k+1}), \ldots, (n', X_{n'}))\}$. Therefore, agents in the set $[n]\setminus[k]$ hold some non-empty bundle.
        Note that, by induction hypothesis and by the definition of the threshold-graph, we know that for all agents $i \in [n]\setminus[k]$, we have $v_i(X_i) \geq \frac{2}{3}\mms_i$. 
        
        Therefore, it remains to prove that the allocation restricted to agents in $[n]\setminus[k]$ is $\efx$ as well. We split these agents into the set $[n]\setminus[n']$ and $[n']\setminus[k]$. By induction hypothesis, we already know the allocation restricted to $[n]\setminus[n']$ is $\efx$. Next, since the if-condition in Line \ref{if} is not satisfied, no agent in $[n]\setminus[n']$ strongly envies any agent in $[n']\setminus[k]$. For all $i \in [n']\setminus[k]$, we have $v_i(X_i) \geq \frac{2}{3}\mms_i$ and $v_i(X_j) < \frac{2}{3}\mms_i$ for all $j \in [n]\setminus[n']$. Hence, no agent in $[n']\setminus[k]$ envies any agent in $[n]\setminus[n']$. Also, for all $i,j \in [n']\setminus[k]$ and all $X'_j \subsetneq X_j$, we have $v_i(X'_j) < \frac{2}{3}\mms_i$ (see Line \ref{line}). Since $v_i(X_i) \geq \frac{2}{3}\mms_i$, $i$ does not strongly envy $j$.
    \end{itemize}
     Finally, we will now prove that Algorithm~\ref{alg:approx-mms+efx} terminates and allocates a non-empty bundle to all agents. Let us write $A$ to denote the set of agents who are allocated a non-empty bundle at any point during the execution of Algorithm \ref{alg:approx-mms+efx}. We will prove that after each iteration of the while-loop, the vector $(\sum_{i \in A} v_i(X_i)$, $|A|)$ increases lexicographically, and hence, the algorithm must terminate. In Case 1, the utility of $a^*$ increases while the utility of all other agents in $A$ does not change and also $|A|$ does not change. On the other hand, in Case 2, since the matching $M$ found in Line~\ref{line:matching} is of size at least one, at least one more agent is added to the set $A$ and thus $|A|$ increases. Since, all agents who were previously in $A$, remain in $A$ and their utilities do not change, the claim follows. 
\end{proof}

Note that, the vector $(\sum_{i \in A} v_i(X_i)$, $|A|)$ can take pseudo-polynomially many values, and the only steps in Algorithm \ref{alg:approx-mms+efx} that cannot be executed in polynomial time are related to computing the exact $\mms$ values of agents and the construction of the bundles $X=(X_1, \ldots, X_{n'})$ such that $v_i(X_j) \geq (2/3)\mms_i$ in Line~\ref{line:mms_partition} of the while-loop. However, by Proposition \ref{prop:ptas} and Lemma \ref{lem:reduction-poly}, if we replace the $\mms$ bound $2/3$ with $2/3-\varepsilon$ for any constant $\epsilon>0$, these steps can be executed in polynomial time. Therefore, we obtain the following result.

\begin{theorem}\label{thm:pseudo-poly}
    For fair division instances with additive valuations and any constant $\varepsilon>0$, a (partial) allocation that is both $\efx$ and $(2/3-\varepsilon)$-$\mms$ can be computed in pseudo-polynomial time.
\end{theorem}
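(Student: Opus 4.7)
The plan is to run Algorithm~\ref{alg:approx-mms+efx} essentially unchanged, but replace the threshold $\tfrac{2}{3}\mms_i$ with $(\tfrac{2}{3}-\varepsilon)\mms_i$ everywhere (in Line~\ref{line:mms_partition}, in the shrinking step of Line~\ref{line}, in the threshold graph of Line~\ref{line:threshold}, and implicitly in the verification that the envy-reassignment in Lines~\ref{line:choice}--\ref{line:a*} preserves the share guarantee). First, I would apply Proposition~\ref{prop:ptas} with a sufficiently small constant $\varepsilon' < \varepsilon$ to compute, in polynomial time, values $\widetilde{\mms}_i$ with $(1-\varepsilon')\mms_i \le \widetilde{\mms}_i \le \mms_i$ for every agent $i$, and use $\widetilde{\mms}_i$ in place of $\mms_i$ as the threshold base; choosing $\varepsilon'$ small enough ensures $(\tfrac{2}{3}-\varepsilon/2)\widetilde{\mms}_i \ge (\tfrac{2}{3}-\varepsilon)\mms_i$, so every share-based inequality proved against $\widetilde{\mms}_i$ also holds against the true $\mms_i$.

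Next I would justify that each step can be executed in polynomial time. The only step of Algorithm~\ref{alg:approx-mms+efx} that is not already polynomial is the construction of the partition $(X_1,\dots,X_{n'})$ in Line~\ref{line:mms_partition}. But by Lemma~\ref{lem:reduction-poly} (applied with the bundles $X_{n'+1},\dots,X_n$ previously allocated, which by the matching condition of the previous rounds satisfy $v_i(X_j) < (\tfrac{2}{3}-\varepsilon/2)\widetilde{\mms}_i$ for the picked agent $i$), such a partition with $v_i(X_j) \ge (\tfrac{2}{3}-\varepsilon)\widetilde{\mms}_i$ can be computed in polynomial time. The shrinking, most-envious-agent identification, threshold-graph construction, and matching step of Lemma~\ref{lem:matching_efx} are all polynomial-time procedures, unchanged by the relaxation.

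Correctness of the output follows almost verbatim from the proof of Theorem~\ref{thm:efx+MMS-1}, simply substituting $(\tfrac{2}{3}-\varepsilon)$ for $\tfrac{2}{3}$ throughout the inductive argument. The invariant now reads: at every point during execution, the allocation restricted to agents who hold a non-empty bundle is $\efx$ and $(\tfrac{2}{3}-\varepsilon)$-$\mms$. Both cases of the induction step (whether the if-condition in Line~\ref{if} fires or not) go through unchanged, because the only property of the constant $\tfrac{2}{3}$ used in the proof is that the same number appears on both sides of each comparison.

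The remaining point---and the main (mild) obstacle---is establishing a pseudo-polynomial bound on the number of iterations. As in Theorem~\ref{thm:efx+MMS-1}, each iteration strictly increases the lexicographic vector $(\sum_{i\in A} v_i(X_i),\,|A|)$, where $A$ is the set of agents currently holding a non-empty bundle. The second coordinate lies in $\{0,1,\dots,n\}$, so it can increase at most $n$ times. Between two such increases, only the first coordinate grows, and its value is bounded by $\sum_{i\in [n]} v_i(\M)$, which is pseudo-polynomial in the input (integer valuations). Moreover, each envy-reassignment in Case~1 strictly increases $v_{a^*}(X_{a^*})$ by at least one unit (assuming integer values), so the number of consecutive Case~1 iterations is also pseudo-polynomially bounded. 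Combining, the total number of while-loop iterations is pseudo-polynomial, and each iteration runs in polynomial time, giving the desired pseudo-polynomial overall runtime.
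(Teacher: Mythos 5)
Your proposal matches the paper's argument: both rely on the pseudo-polynomial bound from the lexicographic potential $(\sum_{i\in A} v_i(X_i), |A|)$, and both make the remaining non-polynomial steps (the $\mms$ values and the partition of Line~\ref{line:mms_partition}) tractable by invoking Proposition~\ref{prop:ptas} and Lemma~\ref{lem:reduction-poly} after relaxing $2/3$ to $2/3-\varepsilon$. Your write-up is somewhat more explicit about how the approximate $\widetilde{\mms}_i$ values are threaded through the thresholds and about the integrality assumption behind the iteration bound, but the route is the same.
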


The only reason why the algorithm runs pseudo-polynomial time and not polynomial time, is that $\sum_{i \in A} v_i(X_i)$ in $(\sum_{i \in A} v_i(X_i)$, $|A|)$ can take pseudo-polynomially many values. By relaxing the notion of exact $\efx$ to $(1-\delta)$-$\efx$ for any constant $\delta$, we make sure that $v_i(X_i)$ can improve $\log_{1/(1-\delta)}(v_i(\M))$ many times which bounds the total number of rounds polynomially. 

\begin{theorem}\label{thm:poly}
    For fair division instances with additive valuations and any constant $\delta>0$ and $\varepsilon>0$, a (partial) allocation that is both $(1-\delta)$-$\efx$ and $(2/3-\varepsilon)$-$\mms$ can be computed in polynomial time.
\end{theorem}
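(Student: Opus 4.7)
The plan is to modify Algorithm~\ref{alg:approx-mms+efx} in two places. First, the $\mms$-related steps will use Proposition~\ref{prop:ptas} to obtain $(1-\varepsilon/2)$-approximations of each $\mms_i$ and will invoke Lemma~\ref{lem:reduction-poly} to construct the $n'$-partition in Line~\ref{line:mms_partition} with bundle-values at least $(2/3-\varepsilon)\mms_i$, all in polynomial time. Second, the envy check at Line~\ref{if} will be weakened to ``some agent $a \in [n]\setminus[n']$ \emph{$(1-\delta)$-strongly envies} $X_j$,'' meaning there exists $g \in X_j$ with $v_a(X_j \setminus g) > v_a(X_a)/(1-\delta)$; when this triggers, the analogue of Line~\ref{line:a*} will select an inclusion-minimal $X'_j \subseteq X_j$ with $v_{a^*}(X'_j) > v_{a^*}(X_{a^*})/(1-\delta)$ such that no agent $(1-\delta)$-strongly envies $X'_j$.

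Correctness will follow by reusing the induction of Theorem~\ref{thm:efx+MMS-1} almost verbatim: the invariant becomes that, at the start of each while-loop iteration, the partial allocation on $[n]\setminus[n']$ is $(1-\delta)$-$\efx$, and every agent $i \in [n]\setminus[n']$ holds a bundle with $v_i(X_i) \geq (2/3-\varepsilon)\mms_i$. Both the Case~1 and Case~2 analyses carry over with $(2/3-\varepsilon)$ in place of $2/3$ and $(1-\delta)$-strong envy in place of strong envy, since those steps never rely on the specific constants.

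The new ingredient is the polynomial bound on the number of while-loop iterations, which I will analyze via the same lexicographic potential $(\sum_{i \in A} v_i(X_i), |A|)$ used in the pseudo-polynomial analysis. Case~2 strictly increases $|A|$ and so can trigger at most $n$ times in total. In Case~1, the modified trigger forces $v_{a^*}(X'_{a^*}) > v_{a^*}(X_{a^*})/(1-\delta)$, which is a strict \emph{multiplicative} increase by a factor of at least $1/(1-\delta) > 1$. Assuming integer-valued valuations without loss of generality (after scaling), $v_i(X_i)$ is a positive integer bounded above by $v_i(\M)$, so each agent can undergo at most $O(\log_{1/(1-\delta)} v_i(\M))$ Case~1 updates, which is polynomial in the input bit-length for constant $\delta$. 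Summing over agents and the at most $n$ Case~2 events bounds the total iteration count by a polynomial.

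The main obstacle I anticipate is the corner case of an agent $i$ who enters $A$ with $v_i(X_i) = 0$, which is possible only when $\mms_i = 0$: her first Case~1 update is then an ``infinite'' multiplicative jump rather than a factor $1/(1-\delta)$. However, such a jump can happen at most once per agent—once her bundle value becomes positive it is at least $1$ under integer valuations, after which the multiplicative growth bound applies—contributing only an additive $O(n)$ term to the iteration count. All per-iteration operations (the approximate $\mms$ partition from Lemma~\ref{lem:reduction-poly}, the threshold-graph construction and its Hall-type matching via Lemma~\ref{lem:matching_efx}, and the minimal $(1-\delta)$-strong-envy subsets) run in polynomial time, yielding the claimed overall polynomial runtime.
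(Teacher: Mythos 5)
Your proposal is correct and follows essentially the same route as the paper: relax the envy trigger (and the repair step) to the $(1-\delta)$ version so that every Case~1 update multiplies $v_{a^*}(X_{a^*})$ by at least $1/(1-\delta)$, bound the number of such updates per agent by $\log_{1/(1-\delta)}(v_i(\M))$, and handle the $\mms$ side via Proposition~\ref{prop:ptas} and Lemma~\ref{lem:reduction-poly}. You merely spell out details the paper leaves implicit (the exact modified trigger, the re-run of the induction, and the one-time jump from a zero-valued bundle), all of which are handled correctly.
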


\subsection{\boldmath Ensuring $2/3$-$\mms$ and $\efx$ with Charity}
In this section, we show that we can bound the number and the value of items that go unallocated in Algorithm~\ref{alg:approx-mms+efx}. We do so by using the algorithm $\mathtt{EFXwithCharity}$ developed by  \citet{chaudhury2021little} which takes a partial allocation $Y$ as input and outputs a (partial) $\efx$ allocation $X$ with the properties mentioned in Theorem \ref{thm:charity}. 

\begin{theorem}\cite{chaudhury2021little} \label{thm:charity}
    Given a (partial) $\efx$ allocation $Y$, there exists a (partial) $\efx$ allocation $X=(X_1, \ldots, X_n)$, such that for all $i \in [n]$
    \begin{enumerate}
        \item $X$ is $\frac{1}{2-|P(X)|/n}$-$\mms$, and
        \item $v_i(X_i) \geq v_i(Y_i)$, and 
        \item $v_i(X_i) \geq v_i(P(X))$, and
        \item $|P(X)|<s$,
    \end{enumerate}
    where $s$ is the number of sources in the envy-graph of $X$. 
\end{theorem}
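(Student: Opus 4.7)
The plan is to start from the given partial $\efx$ allocation $Y$ and iteratively transform it using three local moves until an allocation $X$ satisfying all four properties is reached. The moves are: (i) if the envy-graph of the current allocation contains a directed cycle, rotate bundles along the cycle (each agent on the cycle receives the bundle of the next); (ii) if some source in the envy-graph can absorb an item $g$ from the pool $P$ without becoming strongly envied by anyone, append $g$ to that source's bundle; and (iii) if some agent strongly envies the pool, take a minimal subset $P' \subseteq P$ that is still strongly envied by someone, let $a$ be a most envious agent of $P'$, reassign $P'$ to $a$, and return the previous bundle $X_a$ to the pool.

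First I would check two invariants. Partial $\efx$ is preserved by all three moves: cycle rotation is classical, (ii) is enforced by its side condition, and (iii) follows from the fact used in the proof of Theorem~\ref{thm:efx+MMS-1} that a minimal strongly-envied subset is not strongly envied by any agent. Moreover, $v_i(X_i)$ never decreases under any move, which gives Property~2 directly. Termination would be shown via a lexicographic potential on the sorted vector of agent utilities; moves (ii) and (iii) strictly increase it, and the number of cycle rotations between such increases is finite.

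Once none of the moves applies, Properties~3 and~4 can be read off the termination condition. The failure of (iii) gives $v_i(X_i) \geq v_i(P(X))$ for every $i$, which is Property~3. The failure of (ii) at every source implies that no pool item can be added to any source without breaking $\efx$; by turning this into a charging scheme that assigns each pool item to a distinct source responsible for its obstruction, one obtains $|P(X)| < s$, which is Property~4.

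The main obstacle is Property~1, the $\tfrac{1}{2-p/n}$-$\mms$ bound with $p = |P(X)|$. The ingredients available are $\efx$, which yields $v_i(X_j) \leq v_i(X_i) + \max_{g\in X_j} v_i(g)$ for every $j\neq i$; Property~3, which gives $v_i(P(X)) \leq v_i(X_i)$; and the basic inequality $v_i(\M) \geq n \cdot \mms_i$. Fixing an $\mms$-partition $(A_1,\dots,A_n)$ of agent $i$ and accounting for how items of each $X_j$ and of $P(X)$ are distributed across the $A_\ell$'s, one would derive a linear inequality in $v_i(X_i)$ that rearranges to the desired bound. Carefully bookkeeping the fact that at most $p$ of the $A_\ell$'s can be \emph{damaged} by pool items is the key quantitative step, and I expect it to be the most delicate part of the argument.
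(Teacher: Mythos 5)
This statement is not proved in the paper at all: it is imported verbatim from \citet{chaudhury2021little} and used as a black box (the paper's own contribution here is only Theorem~\ref{thm:efx+MMS-2}, which composes Theorem~\ref{thm:efx+MMS-1} with this result). So there is no in-paper proof to compare against; what you have written is an attempt to reprove the cited result. Your toolkit is the right one --- envy-cycle elimination, minimal envied subsets handed to a most envious agent, and a potential that Pareto-improves utilities --- and this is indeed the skeleton of the Chaudhury et al.\ argument. But as a standalone proof the proposal has three concrete gaps, two of which concern precisely the parts of the theorem that carry its quantitative content.

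First, your trigger for move (iii) is too weak for Property~3. If no agent \emph{strongly} envies the pool, it does not follow that $v_i(X_i) \geq v_i(P(X))$: a pool consisting of a single high-value good is never strongly envied by anyone, yet every agent may envy it. The move must fire whenever some agent \emph{envies} $P$ (i.e.\ $v_i(P) > v_i(X_i)$), and the object to extract is an inclusion-minimal \emph{envied} subset (which is then automatically not strongly envied by anyone), not a minimal strongly-envied subset. Second, the derivation of $|P(X)| < s$ does not work as described. The termination condition of move (ii) --- every pool item, added to every source, would create strong envy --- is perfectly consistent with $s = 1$ and $|P(X)|$ arbitrarily large, so no ``charging scheme'' assigning pool items to distinct obstructing sources can exist in general. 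The actual argument has to show that a blocked configuration with $|P| \geq s$ still admits a potential-increasing update (extracted from the strongly envied set $X_t \cup \{g\}$ itself), so that $|P| < s$ holds at termination for structural reasons rather than by a counting injection; making that update utility-monotone is delicate because it removes items from a source's bundle. Third, Property~1 --- the $\frac{1}{2 - p/n}$-$\mms$ ratio --- is deferred entirely, and the ingredients you list do not suffice: the $\efx$ inequality $v_i(X_j) \leq v_i(X_i) + \min_{g \in X_j} v_i(g)$ leaves additive item-value terms that summing over $j$ does not cancel, and the dependence on $p$ must enter through how the $p$ pool items intersect agent $i$'s $\mms$-partition. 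Since this bound is the whole point of invoking the theorem in Theorem~\ref{thm:efx+MMS-2}, the proposal cannot be considered a proof without it.
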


Therefore, if we run $\mathtt{EFXwithCharity}$ on the output of Algorithm \ref{alg:approx-mms+efx} (which is $\efx$ and $2/3$-$\mms$), we end up with a (partial) $\efx$ allocation which is still $2/3$-$\mms$ but also has all the properties that $\mathtt{EFXwithCharity}$ guarantees.

\begin{restatable}{theorem}{thmTwo}\label{thm:efx+MMS-2}
    For any fair division instance with additive valuations, there exists a (partial) $\efx$ allocation $X=(X_1, \ldots, X_n)$ such that 
    \begin{enumerate}
        \item $X$ is $\max(2/3,\frac{1}{2-p/n})$-$\mms$, and
        \item for all $i \in [n]$, $v_i(X_i) \geq v_i(P(X))$, and
        \item $|P(X)|<s$,
    \end{enumerate}
    where $s$ is the number of sources in the envy-graph of $X$.
\end{restatable}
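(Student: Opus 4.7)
The plan is to combine Theorem~\ref{thm:efx+MMS-1} with the black-box lemma of \citet{chaudhury2021little} recalled in Theorem~\ref{thm:charity}. Concretely, I would first run Algorithm~\ref{alg:approx-mms+efx} on the given instance to obtain a partial allocation $Y=(Y_1,\dots,Y_n)$ that is $\efx$ and $\tfrac{2}{3}$-$\mms$, so that $v_i(Y_i)\ge \tfrac{2}{3}\,\mms_i$ for every $i\in[n]$. Then, since $Y$ is $\efx$, it is a valid input for the routine $\mathtt{EFXwithCharity}$, and I would apply Theorem~\ref{thm:charity} to $Y$ to produce the desired partial $\efx$ allocation $X=(X_1,\dots,X_n)$.

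The bulk of the verification is then bookkeeping. Properties~2 and~3 of the statement are immediate from items~3 and~4 of Theorem~\ref{thm:charity}, so I would just cite them. For property~1, the key point — and really the only step that needs a sentence of justification — is that the utility-monotonicity guarantee $v_i(X_i)\ge v_i(Y_i)$ from item~2 of Theorem~\ref{thm:charity} combines with the $\tfrac{2}{3}$-$\mms$ guarantee on $Y$ to give $v_i(X_i)\ge \tfrac{2}{3}\,\mms_i$ for all $i$. Together with item~1 of Theorem~\ref{thm:charity}, which gives $v_i(X_i)\ge \tfrac{1}{2-p/n}\,\mms_i$ where $p=|P(X)|$, I can take the better of the two bounds to conclude
\[
 v_i(X_i)\;\ge\;\max\!\Bigl(\tfrac{2}{3},\tfrac{1}{2-p/n}\Bigr)\,\mms_i,
\]
which is exactly property~1.

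There is no genuine obstacle here; the only thing to be careful about is that Theorem~\ref{thm:charity} only preserves individual utilities via $v_i(X_i)\ge v_i(Y_i)$ rather than literally preserving the allocation, so the $\tfrac{2}{3}$-$\mms$ bound must be inherited through this inequality rather than by reusing $Y$ directly. The improvement over \citet{chaudhury2021little} is thus obtained essentially for free by feeding their charity procedure a starting allocation that is not only $\efx$ (which is all they used) but also $\tfrac{2}{3}$-$\mms$, the latter being precisely what Theorem~\ref{thm:efx+MMS-1} supplies.
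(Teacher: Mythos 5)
Your proposal is correct and follows exactly the paper's own route: obtain a $2/3$-$\mms$ and $\efx$ partial allocation $Y$ via Theorem~\ref{thm:efx+MMS-1}, then feed it to $\mathtt{EFXwithCharity}$ and read off the properties from Theorem~\ref{thm:charity}. Your only addition is to spell out the bookkeeping (in particular that the $2/3$-$\mms$ bound is inherited through $v_i(X_i)\ge v_i(Y_i)$ rather than by preserving $Y$ itself), which the paper leaves implicit.
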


\begin{proof}
    Using Theorem \ref{thm:efx+MMS-1}, we know that there exist a (partial) $\efx$ allocation $Y$ which $2/3$-$\mms$. Then, we can use Theorem \ref{thm:charity} to obtain a (partial) allocation $X$ with all the stated properties.
\end{proof}

\section{\boldmath $\frac{2}{3}$-$\mms$ Together with $\ef1$}\label{sec:main-2}

In this section, we show that we can compute a complete allocation that is both $2/3$-$\mms$ and $\ef1$. Starting from the output of Algorithm \ref{alg:approx-mms+efx}, we run the well-known \emph{envy-cycle elimination procedure} \cite{lipton2004approximately} on the remaining items to obtain an $\ef1$ allocation which is $2/3$-$\mms$ as well; see Algorithm~\ref{alg:approx-mms+ef1}. We note that our result improves upon the previously best known approximation factor by \citet{amanatidis2020multiple} where they efficiently find allocations that are $4/7$-$\mms$ and $\ef1$.

The procedure of envy-cycle elimination was first introduced by \cite{lipton2004approximately} that computes an $\ef1$ allocation among agents having  monotone valuation; see Algorithm~\ref{alg:envy-cycle} for pseudocode. The idea is to start from an empty allocation and allocate the items one by one such that the partial allocation remains $\ef1$ in each round. In order to do so, one needs to look at the envy-graph of the allocation at each step of the algorithm. If it contains a cycle, by shifting the bundles along that cycle, the utility of all agents on that cycle improves, the allocation remains $\ef1$ and also the number of the edges in the envy-graph decreases. After removing all the cycles, the envy-graph must contain at least one source i.e., an agent whom no one envies. By allocating a remaining item to a source, the allocation remains $\ef1$. While originally, the algorithm starts with an empty allocation, one can also give a partial allocation as an input to the algorithm and perform the envy-cycle elimination procedure on the input allocation with remaining items. If the input allocation is $\ef1$, then the output allocation will be $\ef1$ as well. See Algorithm \ref{alg:envy-cycle} for the pseudo-code of our algorithm.

\begin{algorithm}[t]
    \caption{$\mathtt{envyCycleElimination}(\I, X)$}\label{alg:envy-cycle}
    \textbf{Input:} A fair division instance $\I = (\N, \M, \V)$ and a partial allocation $X=(X_1, \ldots, X_n, P)$\\
    \textbf{Output:} A complete allocation $X=(X_1, X_2, \ldots, X_n)$
    \begin{algorithmic}[1]
        \While{$P \neq \emptyset$}
            \While{there exists a cycle $i_1 \rightarrow i_2 \rightarrow \ldots \rightarrow i_k \rightarrow i_1$ in $G_X$}
                \State $A \leftarrow X_{i_1}$
                \For{$j \leftarrow 1$ to $k-1$}
                    \State $X_{i_j} \leftarrow X_{i_{j+1}}$
                \EndFor
                \State $X_{i_k} \leftarrow A$ 
            \EndWhile
            \State Let $s$ be a source in $G_X$
            \State Let $g$ be a good in $P$
            \State $X_s \leftarrow X_s \cup \{g\}$
            \State $P \leftarrow P \setminus \{g\}$
        \EndWhile        
        \Return $(X_1, X_2, \ldots, X_n)$;
    \end{algorithmic}
\end{algorithm}

The following lemma follows from the work of \cite{lipton2004approximately}.

\begin{restatable}{lemma}{envycycle}\label{lem:envy-cycle}
   Given an instance $\I$, if $X$ is a partial $\ef1$ allocation, then $\mathtt{envyCycleElimination}(\I, X)$ returns a complete $\ef1$ allocation $Y$ in polynomial time such that $v_i(Y_i) \geq v_i(X_i)$ for all $i \in [n]$.
\end{restatable}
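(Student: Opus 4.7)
The plan is to verify three properties of $\mathtt{envyCycleElimination}(\I,X)$: it runs in polynomial time; the returned $Y$ is complete and $\ef1$; and $v_i(Y_i)\ge v_i(X_i)$ for every $i$. The strategy is to maintain as an invariant throughout the outer loop that the current partial allocation is $\ef1$ and that no agent's value has ever decreased from its initial value, and then to argue termination via a potential argument on $G_X$.

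First I would analyze one execution of the inner cycle-elimination step. Given a cycle $i_1\to i_2\to\cdots\to i_k\to i_1$ in $G_X$, each $i_j$ satisfies $v_{i_j}(X_{i_{j+1}})>v_{i_j}(X_{i_j})$, so after the cyclic shift, agent $i_j$ holds a bundle she strictly preferred, and her value strictly increases. The shift merely permutes bundle ownership, so the multiset of bundles is unchanged. Consequently, a non-cycle agent $\ell$ envies exactly the same set of bundles as before, giving the same number of out-edges from $\ell$. A cycle agent $i_j$ now envies a bundle $B$ iff $v_{i_j}(B)>v_{i_j}(X_{i_{j+1}})$; since $v_{i_j}(X_{i_{j+1}})>v_{i_j}(X_{i_j})$, her out-edges form a strict subset of her previous out-edges, and in particular the edge to (the agent now holding) $X_{i_{j+1}}$ disappears. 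Therefore the inner loop strictly decreases $|E(G_X)|$ by at least $k\ge 2$ per iteration, bounding the inner loop by $O(n^2)$ iterations; combined with the fact that the outer loop runs at most $|P|\le m$ times (one item removed per iteration), the whole algorithm is polynomial.

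Next I would handle the source step. After the inner loop, $G_X$ is a DAG and hence contains a source $s$; no agent envies $s$, so $v_i(X_i)\ge v_i(X_s)$ for every $i\ne s$. Assigning a fresh good $g$ to $s$ yields $v_i(X_i)\ge v_i\big((X_s\cup\{g\})\setminus\{g\}\big)$, which is the $\ef1$ condition witnessed by removing $g$. Pairwise envy not involving $s$ is untouched, and $s$ can only envy less since her bundle grew, so $\ef1$ is preserved. The monotonicity $v_i(Y_i)\ge v_i(X_i)$ follows from the same analysis, since cycle swaps strictly raise the values of cycle agents while leaving others unchanged, and the source step weakly raises the source's value.

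The main subtlety I expect is the edge-counting argument that drives the inner-loop termination; the key insight is that because cycle-elimination only permutes bundles, the envy judgments of non-cycle agents are completely unchanged, while cycle agents strictly improve, which both prevents the creation of new out-edges and removes at least the $k$ edges along the processed cycle. Combined with the hypothesis that the input $X$ is $\ef1$, induction on outer iterations yields that the final $Y$, obtained when $P=\emptyset$, is a complete $\ef1$ allocation dominating $X$ agent-wise.
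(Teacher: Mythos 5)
Your proof is correct and, for the only part the paper argues directly --- the monotonicity $v_i(Y_i)\ge v_i(X_i)$ --- it uses exactly the paper's case analysis: an agent's bundle changes only via a cycle swap (a strict improvement, since she receives a bundle she envied) or via receiving a fresh good as a source (a weak improvement). The completeness, $\ef1$, and polynomial-time claims are simply cited from \citet{lipton2004approximately} in the paper, and your edge-counting and source arguments are the standard proof of that cited result, so there is no substantive difference in approach.
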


\begin{proof}
      \cite{lipton2004approximately} showed that $\mathtt{envyCycleElimination}(\I, X)$ returns a complete $\ef1$ allocation in polynomial time.     Fix an agent $i$. In order to prove that $v_i(Y_i) \geq v_i(X_i)$, it suffices to prove that the value of agent $i$ never decreases throughout the algorithm. Initially, agent $i$ owns $X_i$. The bundle of agent $i$ only alters if we eliminate a cycle including agent $i$ which in that case agent $i$ receives a bundle which she envied before. Hence her utility increases. Another case is when agent $i$ is the source to whom we allocate a new good. Also in this case the utility of $i$ cannot decrease. Hence, in the end $v_i(Y_i) \geq v_i(X_i)$.
\end{proof}

\begin{algorithm}[t]
    \caption{$\mathtt{approxMMSandEF1}(\I)$}\label{alg:approx-mms+ef1}
    \textbf{Input:} A fair division instance $\I = (\N, \M, \V)$ \\
    \textbf{Output:} A complete allocation $X$
    \begin{algorithmic}[1]
        \State $X \leftarrow \mathtt{approxMMSandEFX}(\I)$
        \State $X \leftarrow \mathtt{envyCycleElimination}(\I, X)$
        \Return $X$; 
    \end{algorithmic}
\end{algorithm}

We now prove our next result that deals with the compatibility of $\ef1$ allocations with $\mms$ guarantees. 

\begin{theorem} \label{thm:ef1+MMS}
    For fair division instances with additive valuations, Algorithm~\ref{alg:approx-mms+ef1} returns a complete allocation which is $\ef1$ and $2/3$-$\mms$.
\end{theorem}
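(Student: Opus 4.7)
The plan is to combine the two guarantees from the two phases of Algorithm~\ref{alg:approx-mms+ef1} and observe that each guarantee is preserved (or strengthened) by the other phase. The key observation is that the output of the first phase already gives us both fairness properties in weaker forms compatible with what envy-cycle elimination preserves.

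First, I would apply Theorem~\ref{thm:efx+MMS-1} to the partial allocation $X$ produced by $\mathtt{approxMMSandEFX}(\I)$: this $X$ is simultaneously $\efx$ and $\frac{2}{3}$-$\mms$. Since $\efx$ trivially implies $\ef1$, the input to the second phase is already a partial $\ef1$ allocation satisfying $v_i(X_i) \geq \frac{2}{3}\mms_i$ for every agent $i$.

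Next, I would invoke Lemma~\ref{lem:envy-cycle} on this $X$. The lemma gives two crucial things about the final complete allocation $Y = \mathtt{envyCycleElimination}(\I, X)$: (i) $Y$ is $\ef1$, and (ii) $v_i(Y_i) \geq v_i(X_i)$ for every agent $i \in [n]$. Property (i) directly yields the $\ef1$ conclusion. For the $\frac{2}{3}$-$\mms$ guarantee, property (ii) chained with the $\frac{2}{3}$-$\mms$ guarantee of $X$ gives $v_i(Y_i) \geq v_i(X_i) \geq \frac{2}{3}\mms_i$ for every $i$, which is exactly what we need.

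There is essentially no obstacle here: the two phases have been designed precisely so that their guarantees compose. The only conceptual point worth noting explicitly in the write-up is that envy-cycle elimination is monotone in individual utilities (swapping along a cycle strictly improves the involved agents, and adding a good to a source never decreases that source's value), which is what allows the $\frac{2}{3}$-$\mms$ guarantee established in the first phase to survive unchanged through the second phase. Thus the theorem follows directly from Theorem~\ref{thm:efx+MMS-1} and Lemma~\ref{lem:envy-cycle} in two short lines.
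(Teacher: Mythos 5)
Your proposal is correct and follows exactly the paper's own argument: invoke Theorem~\ref{thm:efx+MMS-1} to get a partial $\efx$ (hence $\ef1$) and $\frac{2}{3}$-$\mms$ allocation, then apply Lemma~\ref{lem:envy-cycle} to obtain a complete $\ef1$ allocation whose per-agent utilities only increase, preserving the $\frac{2}{3}$-$\mms$ guarantee. No differences worth noting.
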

\begin{proof}
     For a given fair division instance, Algorithm~\ref{alg:approx-mms+ef1} begins by running Algorithm~\ref{alg:approx-mms+efx} as a subroutine. By Theorem~\ref{thm:efx+MMS-1}, we know that $\mathtt{approxMMSandEFX}(\I)$ returns a partial allocation $X$ which is $2/3$-$\mms$ and $\efx$ and thus $\ef1$. Then, it runs envy-cycle elimination with the remaining items. And, by Lemma~\ref{lem:envy-cycle}, we know that $\mathtt{envyCycleElimination}(\I, X)$ returns a complete allocation $Y$ which is $\ef1$. Moreover, Lemma~\ref{lem:envy-cycle} shows that $v_i(Y_i) \geq v_i(X_i)$ for all agents $i$. Since $X$ is a $2/3$-$\mms$ allocation, $Y$ continues to be a $2/3$-$\mms$ allocation as well. This completes our proof.
\end{proof}

Note that, the envy-cycle elimination procedure runs in polynomial time. For any constant $\varepsilon>0$ and $\delta>0$, by Theorem \ref{thm:pseudo-poly}, we can compute a complete a $(2/3-\varepsilon)$-$\mms$ and $\ef1$ allocation in pseudo-polynomial and by Theorem \ref{thm:poly}, we can compute a $(2/3-\varepsilon)$-$\mms$ and $(1-\delta)$-$\ef1$ allocation in polynomial time.

\begin{theorem}\label{thm:polyEF1}
    For fair division instances with additive valuations and any constants $\varepsilon>0$ and $\delta>0$, a complete allocation that is both $\ef1$ and $(2/3-\varepsilon)$-$\mms$ can be computed in pseudo-polynomial time and a complete allocation that is both $(1-\delta)$-$\ef1$ and $(2/3-\varepsilon)$-$\mms$ can be computed in polynomial time.
\end{theorem}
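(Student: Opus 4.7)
The plan is to imitate the structure of Algorithm~\ref{alg:approx-mms+ef1} and the proof of Theorem~\ref{thm:ef1+MMS}, replacing the exact-$\mms$/$\efx$ subroutine by its approximate counterpart from Theorem~\ref{thm:pseudo-poly} or Theorem~\ref{thm:poly}, and then composing with $\mathtt{envyCycleElimination}$ on the pool of unallocated items. Concretely, for both statements I would first compute a (partial) approximate-$\efx$ and $(2/3-\varepsilon)$-$\mms$ allocation $X$, and then run envy-cycle elimination starting from $X$ to obtain a complete allocation $Y$ that inherits approximate $\ef1$ and retains $(2/3-\varepsilon)$-$\mms$ because, by Lemma~\ref{lem:envy-cycle} (and its approximate variant discussed below), $v_i(Y_i)\geq v_i(X_i)$ for all agents~$i$.

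For the pseudo-polynomial statement, I would invoke Theorem~\ref{thm:pseudo-poly} to compute in pseudo-polynomial time a partial $\efx$ and $(2/3-\varepsilon)$-$\mms$ allocation~$X$. Since $\efx$ implies $\ef1$, Lemma~\ref{lem:envy-cycle} directly applies: $\mathtt{envyCycleElimination}(\I,X)$ returns in polynomial time a complete $\ef1$ allocation $Y$ with $v_i(Y_i)\geq v_i(X_i)\geq (2/3-\varepsilon)\,\mms_i$. The overall running time remains pseudo-polynomial, giving the first part.

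For the polynomial statement, I would start from Theorem~\ref{thm:poly} to get, in polynomial time, a partial $(1-\delta)$-$\efx$ and $(2/3-\varepsilon)$-$\mms$ allocation~$X$, noting that $(1-\delta)$-$\efx$ immediately yields $(1-\delta)$-$\ef1$. The remaining task is to verify that envy-cycle elimination applied to a $(1-\delta)$-$\ef1$ partial allocation returns a complete $(1-\delta)$-$\ef1$ allocation; this is the analog of Lemma~\ref{lem:envy-cycle} that needs to be established. For cycle swaps, the set of bundles is unchanged and owners are only permuted along the cycle, so the $(1-\delta)$-$\ef1$ condition on any pair $(i,j)$ after the swap coincides with the $(1-\delta)$-$\ef1$ condition on a corresponding pair in the old allocation, which held by assumption. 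For the source step, recall that the envy graph $G_X$ uses \emph{strict} envy, so a source $s$ satisfies $v_i(X_s)\leq v_i(X_i)$ for every $i\neq s$; after adding an item $g$ to~$X_s$, choosing this same $g$ as the witness for the $\ef1$ condition on the pair $(i,s)$ yields $v_i((X_s\cup\{g\})\setminus\{g\})=v_i(X_s)\leq v_i(X_i)$, which implies the pair is even exactly $\ef1$ (hence $(1-\delta)$-$\ef1$). Values never decrease in either operation, so $(2/3-\varepsilon)$-$\mms$ is preserved, and the overall running time is polynomial.

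The main obstacle I foresee is essentially bookkeeping: establishing the approximate version of Lemma~\ref{lem:envy-cycle} carefully, since one must check that approximate $\ef1$ is stable under both operations of the envy-cycle procedure. As sketched above this reduces to two short case analyses that hinge on the fact that $G_X$ is defined via strict envy, so nothing in the polynomial-time machinery of Algorithm~\ref{alg:envy-cycle} changes; the rest of the proof is a direct composition of previously established theorems.
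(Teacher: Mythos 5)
Your proposal is correct and follows essentially the same route as the paper: compose Theorem~\ref{thm:pseudo-poly} (resp.\ Theorem~\ref{thm:poly}) with $\mathtt{envyCycleElimination}$, exactly as the paper does in the remark preceding the theorem. Your explicit verification that $(1-\delta)$-$\ef1$ is preserved under cycle swaps and source insertions is a detail the paper leaves implicit (it only states Lemma~\ref{lem:envy-cycle} for exact $\ef1$), and your case analysis for it is sound.
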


\section{Conclusion} \label{sec:conc}
In this work, we embark upon pushing our understanding of achieving guarantees for $\mms$ with $\efx/\ef1$ notions of fairness. We improve the approximation guarantees for the above by developing pseudo-polynomial time algorithms to compute, for any constant $\varepsilon>0$, (i) a partial allocation that is both $(2/3-\varepsilon)$-$\mms$ and $\efx$, and  (ii) a complete allocation that is both $(2/3-\varepsilon)$-$\mms$ and $\ef1$. 
 
While enhancing the above fairness guarantees, we develop a new technique, via Algorithm \ref{alg:approx-mms+efx}, for finding desired partial $\efx$ allocations, in particular, where we have a provable good bound on the amount of value each agent receives. An important line for future work is to further improve the simultaneous guarantees for achieving fairness notions of $\mms$ with $\efx$/$\ef1$.

\bibliographystyle{plainnat}
\bibliography{references}

\end{document}